\newif{\ifdraft}\drafttrue
\newif{\iflatin}\latintrue
\newcommand{\comment}[1]{}
\newcommand{\prref}[1]{\prettyref{#1}}
\newcommand{\Prob}[1]{\mathrm{Pr}\hspace*{-0.1pt}\left[ \mathinner{#1}\right]}
\newcommand{\simG}{\sim_{\BG}}
\newcommand{\lk}{\mathbf{\lfloor}}
\newcommand{\rk}{\mathbf{\rceil}}
\newcommand{\set}[2]{\left\{#1\mathrel{\left|\vphantom{#1}\vphantom{#2}\right.}#2\right\}}
\newcommand{\oneset}[1]{\left\{\mathinner{#1}\right\}}
\newcommand{\os}{\oneset}
\newcommand{\smallset}[1]{\left\{\mathinner{#1}\right\}}
\newcommand{\abs}[1]{\left|\mathinner{#1}\right|}
\newcommand{\Abs}[1]{\left\Vert\mathinner{#1}\right\Vert}
\newcommand{\floor}[1]{\left\lfloor\mathinner{#1} \right\rfloor}
\newcommand{\ceil}[1]{\left\lceil\mathinner{#1} \right\rceil}
\newcommand{\gen}[1]{\left< \mathinner{#1} \right>}
\newcommand{\Gen}[2]{\left< \mathinner{#1} \mid \mathinner{#2}\right>}
\newcommand{\scalp}[2]{\langle {#1}\,,\,{#2}  \rangle}
\newcommand{\scalb}[2]{\langle  {#1}\,,\,{#2}  \rangle_{\! \bet}}
\newcommand{\N}{\ensuremath{\mathbb{N}}}
\newcommand{\Z}{\ensuremath{\mathbb{Z}}}
\newcommand{\Q}{\ensuremath{\mathbb{Q}}}
\newcommand{\R}{\ensuremath{\mathbb{R}}}
\newcommand{\TC}{\ensuremath{\mathsf{TC}^0}}
\newcommand{\NC}{\ensuremath{\mathsf{NC}}}
\newcommand{\sdZ}{\ensuremath{\Z[1/2] \rtimes \Z}}
\renewcommand{\phi}{\varphi}
\newcommand{\eps}{\varepsilon}
\newcommand{\e}{\eps} % Used for evaluations, so we can change VD.
\newcommand{\alp}{\alpha}
\newcommand{\bet}{\beta}
\newcommand{\gam}{\gamma}
\newcommand{\del}{\delta}
\newcommand{\Sig}{\Sigma}
\newcommand{\Del}{\Delta}
\newcommand\DD{\Delta}
\newcommand\GG{\Gamma}
\newcommand\LL{\Lambda}
\newcommand\OO{\Omega}
\newcommand{\Oh}{\mathcal{O}}
\newcommand{\cA}{\mathcal{A}}
\newcommand{\Breduced}{Britton-re\-du\-ced\xspace}
\newcommand{\Breduction}{Britton re\-duc\-tion\xspace}
\newcommand{\WP}{word problem\xspace}
\newcommand{\CP}{conjugacy problem\xspace}
\newcommand{\BS}[2]{\ensuremath{\mathrm{\bf{BS}}_{#1,#2}}}
\newcommand{\BG}{\ensuremath{\mathrm{\bf{G}}_{1,2}} %{\mathop{\mathrm{\bf{BG}}}({1},{2})
}
\newcommand{\smalloverline}[1]
{{\mspace{1mu}\overline{\mspace{-1mu}#1\mspace{-1mu}}\mspace{1mu}}}
\newcommand{\ov}[1]{\smalloverline{#1}}
\newcommand{\oi}[1]{{#1}^{-1}}
\newcommand{\wt}[1]{\widetilde{#1}}
\newcommand{\wh}[1]{\widehat{#1}}
\newcommand\tow{\mathop \tau} %\mathrm{tow}}
\newcommand{\proba}{probability\xspace}
\newcommand{\tr}{triple-representation\xspace}
\newcommand{\PC}{power circuit\xspace}
\newcommand{\IFF}{if and only if\xspace}
\newcommand\lds{,\ldots ,} 
\newcommand{\sse}{\subseteq}
\newcommand{\es}{\emptyset}
\newcommand{\sm}{\setminus}
\newcommand{\OS}{\oneset{-1,0,+1}}
\newcommand\ei[1]{{\emph{#1}\xspace}\index{#1}}
\newcommand\ie{i.e., }
\newcommand\eg{e.g.\xspace}
\begin{document}
\title{Conjugacy in Baumslag's group, generic case complexity, and division in power circuits}
%\title{Generic case complexity of Conjugacy in Baumslag's group, and division in power circuits}
\author{Volker Diekert\inst{1}, Alexei G.{} Myasnikov\inst{2}, Armin Wei{\ss}\inst{1}}
\institute{FMI, Universit\"at Stuttgart,
Universit\"atsstr. 38, D-70569 Stuttgart, Germany\and 
Department of Mathematics, Stevens Institute of Technology, Hoboken, NJ, USA}
%\email{diekert@fmi.uni-stuttgart.de \and amiasnikov@gmail.com}

\date{\today}
\maketitle
\begin{abstract}
The conjugacy is the following question in  algorithmic group theory:  given two words $x$, $y$ over  generators of a fixed group $G$, decide whether $x$ and $y$ are conjugated, \ie whether there exists some $z$ such that $zx z^{-1} =y$ in $G$.
 The conjugacy problem is more difficult
than the word problem, in general. 
We investigate the conjugacy problem for two prominent groups: the Baumslag-Solitar group $\BS12$ and 
the Baumslag(-Gersten) group $\BG$. 
The conjugacy problem in $\BS12$ is $\TC$-complete. To the best of our knowledge 
\BS12 is the first natural infinite non-commutative group where such a precise and low complexity is shown. The Baumslag group $\BG$
is an HNN extension of $\BS12$. We show that the 
conjugacy problem is decidable (which has been known before); but our results go far beyond decidability.  In particular, we are able to show that
conjugacy in $\BG$ can be solved in polynomial time in a strongly generic setting. 
This means 
that essentially  for  all  inputs conjugacy in $\BG$ can be decided efficiently. In contrast, we show that under a plausible assumption the average case complexity of the same problem is non-elementary. Moreover, we provide a lower bound for the conjugacy problem in $\BG$ by reducing the 
division problem in \PC{}s to the conjugacy problem in $\BG$. The complexity of the 
division problem in \PC{}s is an open and interesting problem in integer arithmetic. To date it is believed that this problem has non-elementary time complexity.

Another  contribution of the paper concerns a general statement
about HNN extension of the form $G = \Gen{H,b}{bab^{-1}=\phi(a), a \in A}$ with 
a finitely generated base group $H$. We show that the complement of 
$H$ is strongly generic \IFF $A \neq H \neq B$. This is the 
situation for $\BG$; and yields an important piece of information why it is possible to solve  conjugacy for $\BG$ in strongly generic polynomial time. Note also that the complement of 
$H$ is strongly generic \IFF the Schreier graph of $G$ with respect to the subgroup $H$  is non-amenable.
\end{abstract}
%================================================
\section*{Introduction}\label{sec:intro} %%%%%%%%%%%%%%%%%%%%%%
More than 100 years ago Max Dehn introduced the word problem and the conjugacy problem as fundamental decision problems 
in group theory.  Let $G$ be a finitely generated group. 
 \emph{Word problem}:  Given two words $x$, $y$ written in generators, decide whether $x = y$ in $G$. 
\emph{Conjugacy problem}: Given two words $x$, $y$ written in generators, decide whether $x \sim_G y$ in $G$, \ie decide whether there exists $z$ such that $zx z^{-1}= y$ in $G$.
In recent years, conjugacy played an important role in non-commutative cryptography, see \eg~\cite{CravenJ12,GrigorievS09,SZ1}. These applications use that is is easy to create elements which are conjugated, but to check whether two given elements are conjugated might be difficult even if the word problem is easy. In fact, there are groups where the word problem is easy but the conjugacy problem is undecidable \cite{Miller1}.  Frequently, in cryptographic applications the ambient group is fixed. 
 The focus in this paper is on the conjugacy problem in $\BG$. 
In 1969 Gilbert Baumslag defined the group $\BG$ as an example of a  one-relator group 
which enjoys certain remarkable properties. It was introduced as  an infinite non-cyclic group 
all of whose finite quotients are cyclic \cite{baumslag69}. In particular, it is not residually finite; but being one-relator it has a decidable word problem \cite{mag32}.  The group $\BG$ is generated by generators $a$ and $b$ subject to 
a single relation $bab^{-1} a = a^2bab^{-1}$. Another way to understand $\BG$ is to view it as an HNN extension of the even more prominent 
Baumslag-Solitar group $\BS{1}{2}$.
The group $\BS{1}{2}$ is defined by a single relation %\footnote{\BS{1}{2} belongs to the well-studied family of Baumslag-Solitar groups \BS{p}{q}
%having defining relations $ta^pt^{-1} = a^q$.} 
$tat^{-1} = a^2$ where $a$ and $t$ are generators\footnote{Adding a generator $b$ and a relation 
 $bab^{-1} = t$ results in $\BG$. Indeed, due to $bab^{-1} = t$, we can remove $t$ and we obtain exactly the presentation of $\BG$ above.}. The complexity 
of the word problem and conjugacy problem  in $\BS{1}{2}$ are very low; indeed, we show that they are $\TC$-complete. However, such a low complexity does not transfer to 
the complexity of the corresponding problems in HHN-extensions like $\BG$. Gersten showed that the Dehn function of $\BG$ is non-elementary \cite{gersten91}. %, see also \cite{plat04}. 
Moreover, 
Magnus' break-down procedure \cite{LS01}
on $\BG$ is non-elementary, too. 
 This means that the time complexity for the standard 
 algorithm to solve the word problem in $\BG$  cannot be bounded by any fixed tower of exponentials. 
Therefore, for many years, $\BG$ was the simplest candidate for a group with an extremely difficult word problem.
However,  Myasnikov, Ushakov, and Won showed in \cite{muw11bg}  that the word problem of the Baumslag group is solvable in polynomial time! 
%The original time bound was estimated 
%by $\Oh(n^7)$. This was lowered in \cite{dlu12higman} to $\Oh(n^3)$, which actually might be optimal. 
In order to achieve a polynomial time bound they introduced a versatile data structure for integer arithmetic which they called \emph{power circuit}. The data structure supports
+, $-$, $\leq$, and $(x,y)\mapsto 2^xy$, a restricted version of multiplication which includes exponentiation $x \mapsto 2^x$. 
Thus, by iteration it is possible to represent huge values (involving the tower function) by very small circuits. Still, all operations above can be performed in polynomial time.
On the other hand there are notoriously difficult arithmetical problems in \PC{}s, too. A very important one is division. The input are \PC{}s $C$ and $C'$ representing 
integers $m$ and $m'$; the question is whether $m$ divides $m'$. 
The problem is clearly decidable by converting $m$ and $m'$ into binary; but this procedure is non-elementary. So far, no idea for any better algorithm is known. It is plausible to assume that the problem ``division in \PC{}s''
has  no elementary time complexity at all. 

In the present paper we show a tight relation between the problems ``division in \PC{}s'' and conjugacy in $\BG$. 
Our results concerning the Baumslag-Solitar group $\BS12$, the Baumslag group $\BG$, its generic case complexity, and division in power circuits are as follows. 
\begin{itemize}
%\item The conjugacy problem in $\BG$ is primitive recursive, the time complexity is upper bounded by a tower function. 
\item The conjugacy problem of $\BS12$ is $\TC$-complete. 
\item There is a strongly generic polynomial time algorithm for the conjugacy problem in $\BG$. This means, the difficult instances for the algorithm are  exponentially sparse, and therefore, on random inputs, conjugacy can be solved efficiently. 
\item If ``division in \PC{}s'' is non-elementary in the worst case, then the
conjugacy problem in $\BG$ is non-elementary on the average.
\item A random walk in the Cayley graph of $\BG$ ends with  exponentially decreasing \proba  in $\BS12$. In other terms, the Schreier graph  of
$\BG$ with respect to $\BS12$  is non-amenable.
\end{itemize}
Decidability of the \CP in $\BG$ is not new, it was shown in \cite{beese12}\footnote{It is unknown whether the conjugacy problem in one-relator groups is decidable, in general.} and 
 decidability outside a so-called ``black hole'' follows already from \cite{BorovikMR07}.
 Our work improves Beese's work leading to a polynomial time algorithm outside a proper subset 
 of the ``black hole'' (and decidability everywhere). Thus, our result underlines that in special 
 cases like $\BG$ much better results than stated in \cite{BorovikMR07} are possible. Let us also note that there are  undecidable problems (hence no finite average case complexity is defined), like the 
halting problem for certain encodings of Turing machines, which have  generically linear time partial solutions. 
However, many of these examples depend on encodings and special purpose constructions.
In our case we consider a natural problem where the average case complexity is defined, but the only known algorithm to solve it runs in non-elementary time on the average. Nevertheless, there is a polynomial $p$ (roughly of degree $4$) 
such that the  probability that the same algorithm requires more than $p(n)$ steps on random inputs converges  exponentially fast to zero. The main technical difficulty in establishing a 
strongly generic polynomial time complexity is to show that a random 
walk of length $n$ in the  Cayley graph of $\BG$ ends with \proba less than 
$(1-\eps)^n$ 
in the subgroup $\BS12$ for some $\eps > 0$. Random walks in infinite graphs are widely studied in various areas, see \eg~\cite{woess94} or the textbook 
\cite{woess2000}. In \prref{sec:backbase} we prove a  general statement
about HNN extension of the form $G = \Gen{H,b}{bab^{-1}=\phi(a), a \in A}$ with 
a finitely generated base group $H$ and  $\Del$ a finite symmetric set of generators for $G$. 
We show that the complement of $H$ (inside $\Del^*$) 
 is strongly generic \IFF $A \neq H \neq B$. With other words, 
 the Schreier graph $\GG(G,H,\Del)$ is non-amenable if and only if $A\neq H \neq B$. (For a definition of amenability and its equivalent characterizations see \eg \cite{CeccheriniSilbersteinGH98,KMSS1}.) This  applies to $\BG$  because it is an HNN extension where $A \neq H \neq B$. 
However, in the special case of $\BG$ we can also apply a 
technique quite different from the general approach. In \prref{sec:dyckwordpair} we define a ``pairing'' between random walks in the Cayley graph and Dyck words. We exhibit an 
$\eps >0$ such that 
for each Dyck word $w$ of length $2n$  the \proba that a pairing with $w$ evaluates to $1$ is bounded by $(1/4-\eps)^{n}$. The result follows since there are at most %not more than 
$4^n$ Dyck words. 

%================================================
\subsection*{Notation and preliminaries}\label{sec:notation}
%\subsection{Functions, complexity measures, and groups}\label{sec:funtions}
\noindent{\bf Words.} An \emph{alphabet} is a (finite) set $\Sig$; an element $a \in \Sig$ is called a  \emph{letter}. The set $\Sig^n$
forms the set of \emph{words} of length $n$. The length of $w\in \Sig^n$
is denoted by $\abs w$. The set of all words is denoted by $\Sig^*$. It is the free monoid over $\Sig$. Let $a \in \Sig$ be a letter and  $w\in \Sig^*$. The number of occurrences of $a$ in $w$ is denoted by  ${\abs w}_a$. Clearly, $\abs w = \sum_{a \in \Sig}{\abs w}_a$. If we can write 
$w = uxv$, then we call $x$ a \emph{factor} of $w$; and we say that $w = uxv$ is a \emph{factorization}.

\noindent{\bf Functions.} We use standard $\Oh$-notation for functions from $\N$ to 
non-negative reals $\R^{\geq 0}$. (This includes of course $\OO$- and $\Theta$-notation.) The \ei{tower function} $\tow:\N \to \N$
is defined by $\tow(0) = 0$ and $\tow(i+1) = 2^{\tow(i)}$ for $i \geq 0$. 
It is primitive recursive. %, but  $\tow(6)$ written in binary cannot be stored in the memory of any conceivable real-world computer. 
We say that a function $f:\N \to \R^{\geq 0}$ is \ei{elementary}, if the growth of $f$ can be bounded by a fixed number of exponentials. It is called \ei{non-elementary} if it is not elementary, but $f(n) \in \tow(\Oh(n))$. 
\iflatin \else Thus, in our paper non-elementary means a lower and an upper bound. \fi

%\subsection{Standard and generic case complexity}\label{sec:gcc}
\noindent{\bf Circuit complexity.} We deal with various complexity measures. 
On the lowest level we are interested in problems which can be decided 
by (uniform) $\TC$-circuits. These are circuits of polynomial size 
with constant depth where we allow  %of unbounded fan-in 
Boolean gates and  
majority gates, which evaluate to $1$ \IFF the majority of inputs is $1$.
For a precise definition and 
uniformity conditions we refer to the textbook \cite{Vollmer99}. 
\iflatin \else
 $\TC$ circuits can be simulated by $\NC^1$ circuits, \ie circuits of 
logarithmic depth where only Boolean gates of constant fan-in are allowed. Thus, 
$\TC$ is a very low parallel complexity class. Still it is amazingly powerful with respect to arithmetic. In particular, we shall use Hesse's result that 
division of binary integers can be computed by a uniform family of 
\TC-circuits \cite{hesse01,HeAlBa02}.
\fi

\noindent{\bf Time complexity.} 
A uniform family of $\TC$-circuits computes a polynomial time computable function.
We use a standard notion for worst-case and for average case complexity
and random access machines (RAMs) as machine model.  An algorithm $\cA$  computes a function between domains $D$ and $D'$. In our applications $D$ comes always with a 
natural partition $D = \bigcup\set{D^{(n)}} {n\in \N}$ where each 
$D^{(n)}$ is finite. 
\iflatin \else 
The time complexity $t_\cA$ is defined by 
$t_\cA(n) = \max\set{t_\cA(w)}{w \in D^{(n)}}$.  Assuming a uniform distribution among elements in $D^{(n)}$, the average case complexity is defined by
$ {\mathrm{av}}_\cA(n) = \frac{1}{\vert{D^{(n)}}\vert}\sum_{w \in D^{(n)}}t_\cA(w).\label{eq:av}$ \fi

\noindent{\bf Generic case complexity.}
For many practical applications the ``generic-case behavior'' of an algorithm is more important 
than its average-case or worst-case behavior. We refer to \cite{KMSS1,KMSS2} where the foundations of this theory were developed  
and to \cite{MyasnikovSU08} for applications in cryptography. The notion of \emph{generic complexity} refers to partial algorithms which are defined on a  (strongly) generic set $I\sse D$.  Thus, they may refuse to give an answer outside 
$I$, but if they give an answer, the answer must always be correct. 
In our context it is enough to deal with totally defined 
algorithms and strongly generic sets. Thus, the answer is always computed and always correct, but the runtime is measured by a worst-case behavior over a strongly generic set $I\sse D$. 
Here  a set $I $ is called  \emph{strongly generic}, if 
there exists an $\eps > 0$ 
such that  $\abs {D^{(n)}\sm I } / \abs{D^{(n)}} \leq 2^{-\eps n}$ for almost all $n \in \N$. This means the probability to find a random string outside $I$ converges exponentially fast to zero. Thus, if an algorithm $\cA$ runs in polynomial time 
on a strongly generic set, then, for practical purposes, $\cA$  behaves as 
a polynomial time worst-case algorithm. This is true although the average time complexity of $\cA$ 
can be arbitrarily high. 

\noindent{\bf Group theory.} 
We use standard notation and facts from group theory as found in the 
classical text book \cite{LS01}. 
Groups $G$ are generated by some subset $S\sse G$. 
We let $\ov S = S^{-1}$ and we view $S \cup \ov S$ as an alphabet with 
involution; its elements are called \emph{letters}. We have $\ov {\ov a} = a$ for letters and also for words by letting $\ov{a_1 \cdots a_n} = \ov{a_n} \cdots \ov{a_1}$ where $a_i\in S \cup \ov S$ are letters. Thus, if $g\in G$ is given by a word $w$, then 
$\ov w = \oi g$ in the group $G$. For a word $w$ we denote by $\abs w$ its length.
We say that $w$ is \emph{reduced} if there is no factor $a\ov a$ for any letter.
It is called  \emph{cyclically reduced} if $ww$ is reduced. 
For words (or group elements) we write $x \sim_G y$ to denote conjugacy, \ie $x \sim_G y$ if and only if there exists some $z\in G$ such that $zx \ov z = y$ in $G$.
\iflatin \else For the decision problem ``conjugacy in $G$'' we assume that the input consists of cyclically reduced 
words $x$ and $y$ if not explicitly stated otherwise. \fi 
We apply the standard (so called
Magnus break-down) procedure for solving the word problem 
 in HNN extensions. Our calculations are fully 
 explicit and accessible with basic knowledge in 
 combinatorial group theory 
 
 \noindent{\bf Glossary.} $\TC$ circuit class. $x \sim_G y$ conjugacy in
 groups.  $(\GG,\del)$ \PC{}s. $\eps(P)$, $\eps(M)$ evaluation of nodes and markings. $\tow(n)$ tower function. Baumslag-Solitar group: $\BS12= \Gen{a,t}{tat^{-1} = a^2}$.  Baumslag group: 
 $\BG = \Gen{a,b}{bab^{-1}a = a^2ba^{-1}b^{-1}}$. Subgroup relations
 $A= \langle{a}\rangle$, $T= \langle{t}\rangle \leq \BS12 = \sdZ = H \leq \BG$. Standard symmetric set of generators for $\BG$ is $\Sig = \os{a,\ov a, b, \ov b}^*$ and $\ov z = \oi z$ in groups.
 
 \iflatin \noindent{\bf Proofs.} Missing proofs are in the appendix and in the full paper on the arxiv server.%\prref{sec:app} 
 \else 
 \fi

%================================================
\section{Power circuits}\label{PCs}
In binary a number is represented as a sum
$m= \sum_{i=0}^k b_i 2^i$ with $b_i \in \os {0,1}$. 
Allowing $b_i \in \os {-1,0,1}$ we obtain a ``compact representation'' of integers, which may require less non-zero $b_i$s than the normal representation.
The notion of \PC is due to \cite{MyasnikovUW12}. It generalizes 
compact representations and goes far beyond since it allows a compact representation of  tower functions. Formally: 
a \emph{\PC} of size $n$ is given 
by a pair $(\GG,\del)$. Here, 
$\GG$ is a set of $n$ vertices and $\del$ is a mapping $\del: \GG \times \GG\to \OS$.
The support of $\del$ is the subset $\DD \sse \GG \times \GG$ 
with $(P,Q) \in \DD \iff \del(P,Q) \neq 0$. 
Thus, $(\GG,\DD)$ is a directed graph. 
Throughout we require that $(\GG, \DD)$ is acyclic. 
In particular, $\del(P,P)=0$ for all vertices $P$. 
 A \ei{marking} is a mapping $M:\GG\to\OS$. We can also think of a marking as a
subset of $\GG$ where each element in $M$ has a sign ($+$ or $-$). 
%Thus, we also speak about a marking as a \ei{signed subset}.
If $M(P) =0$ for all 
$P\in \GG$ then we simply write $M= \es$. 
Each node $P\in \GG$ is associated in a natural way with a successor marking $\LL_P: \GG\to \OS, \; Q \mapsto \del(P,Q)$, consisting of the target nodes of outgoing arcs from $P$.
% Thus, the marking $\LL_P$ is the signed subset $\sig(\del)$ 
% corresponding to the targets of outgoing arcs from $P$. 
We define the \ei{evaluation} $\e(P)$ of a node ($\e(M)$ of a marking resp.)
bottom-up in the directed acyclic graph by induction:
\iflatin 
$\e(\es) = 0$,
$\e(P) = 2^{\e(\LL_P)}$ for a node $P$, and 
$\e(M) = \sum_{P}M(P)\e(P)$ for a marking  $M$.
\else
\begin{align*}
\e(\es) &= 0, \\
\e(P) &= 2^{\e(\LL_P)} &\text{for a node $P$}, \\
\e(M) &= \sum_{P}M(P)\e(P) &\text{for a marking  $M$}.
\end{align*}
\fi
Note that leaves evaluate to $1$, the evaluation of a marking is a real number, and the 
evaluation of a node $P$ is a positive real number. Thus, $\e(P)$ and $\e(M)$
are well-defined. We have $\eps(\LL_P) = \log_2(\eps(P))$, thus 
the successor marking plays the role of a logarithm. 
We are interested only in \PC{}s where all 
markings evaluate to integers; equivalently 
all nodes evaluate to some positive natural number in $2^{\N}$.

The \emph{\PC-representation} of an integer sequence $m_1, \ldots, m_k$ is given
by a tuple $(\GG, \del; M_1, \ldots, M_k)$
where $(\GG, \del)$ is a \PC and $M_1, \ldots, M_k$ are markings such that 
$\e(M_i) = m_i$. (Hence, a single \PC can store several different numbers;
a fact which has been  crucial in the proof of \prettyref{prop:WPgersten}, see \cite{dlu12higman}.)

\begin{example}\label{ex:binarybasis}
\iflatin We can represent every $n$-bit integer
as a \PC with $\Oh(n)$ vertices. % and $\Oh(n \log  n)$ arcs.
\else
We can represent every integer in the range 
$[-n,n]$ as the evaluation of some marking in a \PC with node set $\oneset{P_{0,n} \lds P_\ell}$ such that $\e(P_i) =2^{i}$ for $0 \leq i \leq \ell$ and $\ell = \floor{\log_2 n}$.
Thus, we can convert the binary notation of an integer $n$ into a \PC
with $\Oh(\log\abs n)$ vertices and $\Oh((\log \abs n)\log\log \abs n)$ arcs.
\fi
\end{example}

\begin{example}\label{ex:powtow}
A \PC of size $n$ can realize $\tow(n)$ since a chain of $n$ nodes represents
$\tow(n)$ as the evaluation of the last node. 
\end{example}

\begin{proposition}[\cite{muw11bg,dlu12higman}]\label{prop:dlu_ijac}
The following operations can be performed in quadratic time. Input a 
\PC $(\GG,\del)$ of size $n$ and two markings $M_1$ and $M_2$. Decide whether $(\GG,\del)$ is indeed a \PC, \ie decide 
whether  all markings evaluate to integers. 
\iflatin 
If ``yes'':
Decide whether $\e(M_1) \leq \e(M_2)$; and 
compute a new \PC with markings $M$, $X$ and $U$ such that 
\begin{enumerate}
\item $\e(M) = \e(M_1) \pm \e(M_2)$.
\item $\e(M) = 2^{\e(M_1)} \cdot \e(M_2)$.
\item $\e(M_1) = 2^{\e(X)} \cdot  \e(U) $ and either $U = \es$ or $\e(U)$ is  odd.
\end{enumerate}
\else
If ``yes'':
\begin{itemize}
\item Decide whether $\e(M_1) \leq \e(M_2)$.
\item Compute a new \PC with markings $M$, $X$ and $U$ such that 
\begin{enumerate}
\item $\e(M) = \e(M_1) \pm \e(M_2)$.
\item $\e(M) = 2^{\e(M_1)} \cdot \e(M_2)$.
\item $\e(M_1) = 2^{\e(X)} \cdot  \e(U) $ and either $U = \es$ or $\e(U)$ is  odd.
\end{enumerate}
\end{itemize}
\fi
\end{proposition}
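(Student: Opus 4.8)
The plan is to route every item through a single normalization subroutine and make the individual operations almost trivial on its output. Call a power circuit \emph{reduced} if the evaluation map $P\mapsto\e(P)$ is injective and every successor marking $\LL_P$ (as well as the input markings) is \emph{compact}, i.e.\ its support contains no two nodes $Q,Q'$ with $\e(Q')=2\,\e(Q)$; so a compact marking is a signed sum of powers of two with no two adjacent exponents, the analogue of a non-adjacent-form representation. First I would record the two elementary facts that make reduced circuits pleasant. (i) In a compact marking $\sum_i\sigma_i 2^{e_i}$ the top term dominates, since $\sum_{i<\ell}2^{e_i}<2^{e_\ell}$ whenever no two of the $e_i$ are adjacent; hence the sign of a compact marking, and the order between two markings over the same node set, is read off from the highest node on which they differ. (ii) Once a circuit is reduced, all node values are distinct powers of two, so a marking is odd \IFF its support contains the (unique) leaf, and the $2$-adic valuation of $\e(M)$ equals $\e(\LL_{P_0})$, where $P_0$ is the smallest active node of $M$.

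The crux is the \textbf{normalization lemma}: any power circuit of size $n$ together with its markings can be turned into an equivalent reduced one in time $\Oh(n^2)$, and the same sweep decides whether it is a genuine power circuit. I would process the nodes in a topological order (available since $(\GG,\DD)$ is acyclic), maintaining the invariant that the already-treated lower part is reduced. When a node $P$ is reached, all targets of $\LL_P$ are already reduced, so $\e(\LL_P)$ is a signed sum of distinct powers of two, and I can: (a) decide the sign of $\e(\LL_P)$ by the top-term rule (i) --- the validity test ``all markings evaluate to integers'' is exactly the check $\e(\LL_P)\ge 0=\e(\es)$ at every node, since integrality is automatic from the induction; and (b) rewrite $\LL_P$ into compact form by the standard carry rules for signed-digit sums (creating higher nodes as needed, their number staying linear), then merge $P$ with any earlier node of equal value. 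Detecting equal or adjacent values only requires comparisons inside the already-reduced part, each of which is the cheap top-term scan of (i).

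Granting the lemma, the four tasks are routine. \emph{Comparison} of $\e(M_1)$ and $\e(M_2)$: form a marking for $\e(M_1)-\e(M_2)$ on a disjoint copy (to keep coefficients in $\OS$), reduce it, and answer equal/less/greater according to whether the result is empty or to the sign of its top node via (i). \emph{Sum and difference} $\e(M)=\e(M_1)\pm\e(M_2)$: take the union of the two markings with the appropriate signs and reduce. The \emph{restricted product} $\e(M)=2^{\e(M_1)}\e(M_2)$: for each node $P$ in the support of $M_2$ introduce a fresh node $P'$ whose successor marking evaluates to $\e(\LL_P)+\e(M_1)$ (e.g.\ the union of $\LL_P$ with a copy of $M_1$), so that $\e(P')=2^{\e(\LL_P)+\e(M_1)}=2^{\e(M_1)}\e(P)$; let $M$ inherit the signs of $M_2$ and reduce. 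The \emph{factorization} $\e(M_1)=2^{\e(X)}\e(U)$: after reducing $M_1$, if $M_1=\es$ set $X=U=\es$; otherwise let $P_0$ be the smallest active node, put $X=\LL_{P_0}$, build $U$ as in the product step but with each new node's successor marking evaluating to $\e(\LL_P)-\e(X)$, so that $\e(U)=\e(M_1)/2^{\e(X)}$, and reduce; the resulting $\e(U)$ is odd because its smallest node has value $2^0=1$.

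I expect the main obstacle to be the normalization lemma, and inside it the mutual dependence of comparison and reduction: deciding whether two node values coincide or are adjacent presupposes the ability to compare node values, which is only licensed once those nodes are themselves reduced. The bottom-up topological sweep is what breaks this circularity. The genuinely delicate point is then quantitative --- keeping the cumulative cost of all the top-term comparisons, compactions, carries, and merges quadratic rather than cubic --- and this is where the $\Oh(n^2)$ bound of the proposition actually lives.
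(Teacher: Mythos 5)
The paper itself gives no proof of this proposition---it is quoted directly from \cite{muw11bg,dlu12higman}---and your proposal follows essentially the same route as those sources: your ``normalization lemma'' (sorted, injective node values; compact/non-adjacent-form markings; a bottom-up sweep with carries, merging of equal-valued nodes, and the sign test $\e(\LL_P)\ge 0$ doubling as the validity check) is precisely the reduction procedure of Diekert--Laun--Ushakov, after which comparison, addition, the restricted product, and the odd-part factorization are handled exactly as you describe. The quadratic bound on that sweep, which you rightly flag as the only delicate point, is the technical heart of the cited papers rather than something this paper re-proves.
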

%\begin{proof}
% See \cite{muw11bg} and \cite{dlu12higman}.
%\qed\end{proof}
Let us mention that the complexity 
of the division  problem in \PC{}s is open. 
\iflatin The only known general algorithm transforms markings into binary numbers. This involves 
a non-elementary explosion.
\else
Here, the division problem is as follows. Given a
\PC of size $n$ and two markings $M_1$ and $M_2$,
decide whether $\e(M_1)\mid \e(M_2)$, \ie $\e(M_1)$ divides $\e(M_2)$.
We suspect that the division  problem in \PC{}s is extremely difficult. 
The only known general algorithm transforms $\e(M_1)$ and $\e(M_2)$
first in binary and solves division after that. So, the first part involves 
a non-elementary explosion. %; and so far no idea is known how to avoid it. 
\fi

\section{Conjugacy in the Baumslag-Solitar group $\BS12$}\label{sec:cpbs}
The solution of the \CP in the Baumslag group $\BG$ relies on  the simpler solution
for the Baumslag-Solitar group  $\BS{1}{2}$. The aim of this section is to show that the \CP in $\BS{1}{2}$ is $\TC$-complete. The group
\BS12 is given by the presentation $\Gen{a,t}{ta\oi t = a^2}$. 
We have $ta = a^2 t$ and $a \oi t  = \oi t a^2$. This allows to represent all group elements by words of the form ${t}^{-p}a^r t^q$ with $p,q \in \N$ and $r \in \Z$. However, for $q\geq 0$,  transforming ${t}^{q}a^r$ into this form leads to $a^s t^q$ with $s = 2^q r$, so the word 
$a^s t^q$ is exponentially longer than the word ${t}^{q}a^r $.
We denote 
by $\Z[1/2] = \set{p/2^q\in \Q}{p,q \in \Z}$ the ring of 
\emph{dyadic fractions}. Multiplication by $2$ is an automorphism of the underlying additive group and therefore we can define the semi-direct product $\Z[1/2] \rtimes \Z$ as follows. Elements are pairs $(r,m) \in 
\Z[1/2] \times \Z$. The multiplication in $\Z[1/2] \rtimes \Z$ is
 defined by  $$(r,m)\cdot(s,q) = (r + 2^m s,m+q).$$ Inverses can be computed by the formula $(r,m)^{-1} = (-r \cdot 2^{-m}, -m)$.
 It is straightforward to show that 
 $a\mapsto (1,0)$ and $t\mapsto (0,1)$ defines an isomorphism
 between $\BS12$ and \sdZ. %%% Macro for \Z[1/2] \rtimes \Z
In the following we abbreviate  $\BS{1}{2}$ ($ = \sdZ$) by  $H$.
There are several options to represent a group element $g \in H$. 
In a \emph{unary representation} we write $g$ as a word over 
the alphabet with involution $\os{a, \ov a, t, \ov t}$.
Another way is to write  $g= (r,m)$ with $r \in \Z[1/2]$ and $m\in \Z$.
In the following we use both notations interchangeably.
 The \emph{binary representation} of  $(r,m)$ consists of $r$ written in binary (as floating point number) and  $m$ in unary. Let us write $(r,m)$ with $r = 2^k s$ and $k,s,m\in \Z$. We then have $(2^k s, m ) = (0,k)\cdot (s,m-k)$ and the  corresponding triple $[k,s,m-k]\in \Z^3$ is called the \emph{\tr} of $(r,m)$; it is not unique. 
The \emph{\PC  representation} of $g= [k,s,m-k]$ is given 
by a \PC and markings $K$, $S$, $L$ such that 
$\eps(K)= k$, $\eps(S)= s$, and $\eps(L)= m-k$.
Note that if $g \in \os{a, \ov a, t, \ov t}^n$ satisfies 
%a purely unary representation  of $g \in H$ (\ie as word in $\os{a, \ov a, t, \ov t}^*$) has length $n$  
 $g =(r,m)\in H$, then  $\abs{r} \leq 2^n$ and $\abs m \leq n$. Thus, a transformation from unary to binary notation is on the safe side. 
 
\begin{proposition}\label{prop:multiplyBS}
Let $(r_1,m_1), \ldots , (r_n,m_n) \in \Z[1/2] \rtimes \Z$ given in binary representation for all $i$. Then there is a uniform construction of a $\TC$-circuit which calculates $(r,m)=(r_1,m_1)\cdots (r_n,m_n)$ in $\Z[1/2] \rtimes \Z$.
\end{proposition}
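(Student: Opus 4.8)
The plan is to reduce the iterated product to a single explicit arithmetic expression and then recognize that expression as a composition of standard constant-depth primitives. First I would unfold the semi-direct product. Writing $\sig_0 = 0$ and $\sig_i = m_1 + \cdots + m_i$ for the prefix sums of the second components, a telescoping induction on the multiplication rule $(r,m)\cdot(s,q) = (r+2^m s,\, m+q)$ gives the closed form
\[
(r,m) = (r_1,m_1)\cdots(r_n,m_n) = \Bigl(\, \sum_{i=1}^n 2^{\sig_{i-1}} r_i,\ \sig_n \,\Bigr).
\]
The base case $n=1$ is immediate, and the inductive step uses only that $\sig_{k}+m_{k+1}=\sig_{k+1}$ and that left-multiplication scales the incoming first component by $2^{\sig_k}$. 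Thus the whole task splits into computing the single integer $m=\sig_n$ and the single dyadic number $r = \sum_{i=1}^n 2^{\sig_{i-1}} r_i$.

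Next I would pin down the sizes. Since the $m_i$ are given in unary, the total input length $N$ dominates $\sum_i \abs{m_i}$, so every prefix sum satisfies $\abs{\sig_{i-1}} \leq N$; in particular $m=\sig_n$ is obtained by a single iterated addition of integers, which lies in $\TC$. For the first component, each $r_i$ is a signed dyadic fraction whose binary (fixed-point) expansion occupies $\Oh(N)$ bit positions around the binary point. Multiplying $r_i$ by $2^{\sig_{i-1}}$ merely shifts these positions by the integer $\sig_{i-1}$, and since $\abs{\sig_{i-1}}\leq N$ the shifted value still lives in an $\Oh(N)$-bit window; hence the final sum $r$ and all $n$ summands fit in fixed-point representations of width polynomial in $N$. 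Signs are handled in the usual way, \eg by a two's-complement fixed-point encoding or by splitting each number into a positive and a negative part.

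With the formula and the bounds in hand, the circuit is assembled from three $\TC$ stages composed in constant depth. Stage one computes, in parallel, all $n$ prefix sums $\sig_0,\dots,\sig_{n-1}$ and $m = \sig_n$; these are iterated additions of integers and hence $\TC$-computable (see \cite{Vollmer99}). Stage two, for each $i$ in parallel, realizes $2^{\sig_{i-1}} r_i$ as a barrel shift of the fixed-point word of $r_i$ by the precomputed amount $\sig_{i-1}$, a standard $\TC$ operation. Stage three adds the $n$ resulting fixed-point words by one more iterated addition, yielding $r$. Since each stage is a uniform $\TC$-family and a constant number of $\TC$-stages compose to a $\TC$-family, the overall construction is a uniform $\TC$-circuit, uniformity being inherited from that of the individual primitives.

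The main obstacle is not the algebra, which is a one-line induction, but confining the shift-and-add pipeline to constant depth: this rests entirely on the nontrivial fact that iterated addition of polynomially many, polynomially long integers lies in $\TC$, together with the need to keep the fixed-point windows and the sign bookkeeping consistent across the $n$ parallel shifts so that a single final iterated addition finishes the job.
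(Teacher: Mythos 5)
Your proof is correct and takes essentially the same route as the paper's: the identical closed form $r=\sum_{i=1}^n 2^{\sig_{i-1}}r_i$, $m=\sig_n$, obtained by induction on the multiplication rule, followed by the same three-stage $\TC$ pipeline (prefix sums of the unary $m_i$ by iterated addition, parallel bit shifts, one final iterated addition). The only difference is bookkeeping: the paper normalizes to $\abs{m_i}\leq 1$ and a floating-point (mantissa/exponent) encoding of the $r_i$, where you use fixed-point windows of width $\Oh(N)$ --- both immaterial to the argument.
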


\newcommand{\proofmultiplyBS}{
\begin{proof}The statements concerning computations in $\TC$ are standard and can be found e.g.~in the textbook \cite{Vollmer99}. Let $N = \max\set{m_i, \floor{\abs{\log_2 r_i}}+1,n}{1\leq i \leq n}$. Since the $m_i$ are written in unary, we may assume for simplicity 
 that all $\abs{ m_i} \leq 1$ (hence, requiring $2$ bits) and all $r_i$ are written in binary using 
 exactly $2N$ bits ($N$ bits for the mantissa and $N$ for the exponent). Thus, we may assume that the input is a bit-string of length exactly $2(N^2+N)$. We have $m = \sum_{i=1}^n m_i$. By induction, using the equality $(r,m)(s,q) = (r+s\cdot 2^m,m+q)$, we see $r = \sum_{i=1}^n r_i\cdot 2^{k_i}$ where $k_i = \sum_{i=1}^{k-1} m_i$.
Since the numbers $k_i$ are bounded by $N$, they can be calculated by the iterated addition of the unary numbers $m_j$ for $j<i$, which is in $\TC$. In particular, $m$ can be calculated by a $\TC$-circuit.
The bit shift $r_i\mapsto r_i \cdot 2^{k_i}$ can be computed by a $\TC$-circuit.
It remains to calculate the iterated addition of binary numbers which is possible in $\TC$.
\qed\end{proof}
}
\iflatin \else \proofmultiplyBS \fi
%%%%%%%%%%%%%%%%%%%
The next proof uses a deep result of Hesse: integer division is in uniform $\TC$. 
\begin{proposition}\label{prop:conjBS}
Let $f=(r,m), g=(s,q)\in \Z[1/2] \rtimes \Z$ be given in binary representation. Then there is a uniform construction of a $\TC$-circuit which decides $f\sim_H g$.
\end{proposition}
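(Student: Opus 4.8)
The plan is to reduce the statement $f \sim_H g$ to explicit arithmetic conditions on the triples $(r,m)$ and $(s,q)$, and then verify that each condition is checkable in uniform $\TC$. First I would observe that conjugacy preserves the image under the projection $H \to \Z$, $(r,m) \mapsto m$, since this projection is a homomorphism onto an abelian group. Hence a necessary condition is $m = q$; this is trivially testable (the exponents are given in unary). The remaining work is to characterize when two elements with the same $\Z$-component are conjugate. I would conjugate $f = (r,m)$ by an arbitrary $z = (x,k) \in H$ and compute, using the multiplication law $(r,m)\cdot(s,q) = (r + 2^m s, m+q)$ and the inverse formula $(x,k)^{-1} = (-x\cdot 2^{-k}, -k)$, the element $z f z^{-1}$. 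A direct calculation gives
\begin{align*}
z f z^{-1} &= (x,k)(r,m)(-x\cdot 2^{-k},-k) \\
&= \bigl(x + 2^k r + 2^{k+m}(-x\cdot 2^{-k}),\, m\bigr) \\
&= \bigl(x + 2^k r - 2^m x,\, m\bigr) \\
&= \bigl(2^k r + (1-2^m)x,\, m\bigr).
\end{align*}

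**The arithmetic condition.**
So $f \sim_H g$ holds if and only if $m = q$ and there exist $k \in \Z$ and $x \in \Z[1/2]$ with $s = 2^k r + (1-2^m)x$. I would split into two cases according to whether $m = 0$. If $m = 0$, then the term $(1-2^m)x$ vanishes and conjugation acts only by the automorphism $r \mapsto 2^k r$; thus $f \sim_H g$ iff $s = 2^k r$ for some $k \in \Z$, i.e.\ $r$ and $s$ are equal up to a power of two. Using the odd-part normal form from \prref{prop:dlu_ijac}(3) at the level of binary dyadic fractions, this reduces to comparing the odd parts of $r$ and $s$, which is a $\TC$ computation. If $m \neq 0$, then $1 - 2^m$ is a nonzero integer, and since $x$ ranges over all of $\Z[1/2]$ the set $\set{(1-2^m)x}{x \in \Z[1/2]}$ is exactly $(1-2^m)\Z[1/2] = \Z[1/2]$ when $m > 0$ (because $1-2^m$ is odd times a unit in $\Z[1/2]$ only when it is a power of two — here I must be careful) — more precisely $(1-2^m)\Z[1/2]$ is the ideal generated by the odd part of $1-2^m$. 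So the existence of a solution $x$ reduces to a divisibility test of the odd part of $s - 2^k r$ by the odd part of $1-2^m$, ranging over the finitely many relevant shifts $k$.

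**Reducing to integer division.**
The key technical point, and the place where Hesse's theorem is invoked, is the divisibility test: after clearing denominators (multiplying through by a suitable power of two so that everything becomes an integer, which only shifts exponents) the condition becomes whether one binary integer divides another, ranging over the bounded set of shift values $k$. I would argue that only polynomially many values of $k$ need to be checked, because the binary lengths of $r$, $s$, and $1-2^m$ are polynomial in the input (this uses the bound $\abs r \le 2^n$, $\abs m \le n$ noted before \prref{prop:multiplyBS}), so any solution $k$ lies in a polynomially-bounded range. For each such $k$ the divisibility of two polynomially-long binary integers is decidable in uniform $\TC$ by Hesse's result; testing all candidate $k$ in parallel and OR-ing the outcomes stays in $\TC$, and the outer comparisons ($m \overset{?}{=} q$, the power-of-two test when $m=0$) are standard $\TC$ operations. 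The main obstacle I expect is the bookkeeping of exponents and signs in the dyadic arithmetic — correctly pinning down the ideal $(1-2^m)\Z[1/2]$ and its odd part, and verifying that the search over $k$ is genuinely bounded — rather than the circuit-complexity claims, which are routine once the arithmetic condition is stated cleanly.
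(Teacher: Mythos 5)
Your overall strategy is the same as the paper's: reduce $f\sim_H g$ to the condition $m=q$ together with the solvability of $s = 2^k r + (1-2^m)x$ over $k\in\Z$, $x\in\Z[1/2]$, then turn the case $m\neq 0$ into a divisibility test modulo the odd part of $1-2^m$ and invoke Hesse's theorem. However, there is a genuine gap at the one place where a number-theoretic idea is actually needed: your justification for restricting the search over $k$. You claim that ``any solution $k$ lies in a polynomially-bounded range'' because the binary lengths of $r,s,1-2^m$ are polynomial in the input. That claim is false. If $(k,x)$ is a solution, then so is $(k+m, x')$ with $x' = x - 2^k r \in \Z[1/2]$, since $s - 2^{k+m}r = (s-2^k r) - 2^k r(2^m-1)$ still lies in the ideal $(2^m-1)\Z[1/2]$; hence the set of solutions $k$ is unbounded in both directions whenever it is nonempty, no matter how short the inputs are. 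The correct reason the bounded search is complete---and the one the paper uses---is periodicity: $2$ is invertible modulo $2^m-1$ and has multiplicative order exactly $m$, so $2^k \bmod (2^m-1)$ depends only on $k \bmod m$, and therefore a solution exists if and only if one exists with $0\leq k<m$. This same invertibility of $2$ is also what lets one clear the denominator $2^\ell$ of $x$ (and handle negative $k$) without changing the divisibility condition. Without stating and using this fact, the completeness of your circuit (that it never misses a conjugator with large or negative $k$) is unproven, even though the circuit itself happens to be correct.

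A secondary inaccuracy: for $m\neq 0$ you assert that $1-2^m$ is a nonzero integer, which fails for $m<0$ (e.g.\ $1-2^{-1}=1/2$). The paper avoids this by first normalizing: $(r,m)\sim_H(s,m)$ iff $(-r,-m)\sim_H(-s,-m)$ (via inverses), so one may take $m\geq 0$, and conjugation by $t^k$ lets one further assume $r,s\in\Z$. Your odd-part/ideal formulation can absorb this issue, since the ideal $(1-2^m)\Z[1/2]$ is generated by the odd part in all cases, but as written the case analysis is not sound; either adopt the paper's normalization or consistently phrase everything in terms of ideals of $\Z[1/2]$ from the start.
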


%\iffalse
\begin{proof}
Let $(r,m)\sim_H (s,q)$, \ie there are $k\in \Z$, $x\in \Z[1/2]$ with $(x,k) (r,m)= (s,q) (x,k)$. In particular,  $(r,m)\sim_H (s,q)$ if and only if $m=q$ and there are $k\in \Z$, $x\in \Z[1/2]$ such that 
\begin{equation}
 s =  r\cdot 2^k - x \cdot (2^m-1).\label{eq:conj}
\end{equation}
We have $(r,m)\sim_H (s,m)$
if and only if  $(-r,-m)\sim_H (-s,-m)$ since $(-p,-m)\sim_H (-p2^{-m},-m)
= (p,m)^{-1}$  for all $p\in \Z[1/2]$. Therefore, without restriction  $m \in \N$.
Since a conjugation with $t^k$ maps $(r,m)$ to $(2^kr,m)$, we may assume that $r,s \in \Z$ and $m \in \N$. For $m= 0$ this means $(r,0)\sim_H (s,0)$ \IFF there is some $k \in \Z$ such that $s =  r\cdot 2^k$. This can be decided in $\TC$. For $m=1$ we can choose $x = r-s$ and the answer is ``yes''. For $m \geq 2$ we can multiply \prettyref{eq:conj} by $2^\ell$ such that $x\cdot 2^\ell\in \Z$. We obtain $ 2^\ell\cdot(r \cdot 2^k - s) =  2^\ell x \cdot (2^m-1)$, \ie  $ 2^\ell\cdot(r \cdot 2^k - s) \equiv 0 \bmod (2^m-1)$.
The number $2$ is invertible modulo $2^m-1$ and its order is $m$.  Hence, actually for $m \geq 1$:
\begin{equation}
 (r,m)\sim_H (s,m) \iff \exists k \in \N: 0\leq k < m \wedge r \cdot 2^k - s\equiv 0 \bmod (2^m-1).\label{eq:frida}
\end{equation}
It can be checked % in parallel 
whether such a $k$ exists using  Hesse's result for division \cite{hesse01,HeAlBa02}.
\qed\end{proof}
%\fi

% %Let $f=(r,m), g=(s,q)\in \Z[1/2] \rtimes \Z$. 
% A %conjugation with $a = (1,0)$ maps $(r,m)$ to $(r + 2^m-1,m)$; a 
% conjugation with $t = (0,1)$ maps $(r,m)$ to $(2r,m)$. Therefore, we may assume that $r,s \in \Z$. 

\begin{theorem}\label{thm:bstc}
 The word problem as well as the conjugacy problem in $\BS{1}{2}$ is $\TC$-complete.
\end{theorem}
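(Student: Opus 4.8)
The plan is to split each of the two completeness statements into a membership part (``in $\TC$'') and a hardness part (``$\TC$-hard''), and to note that both parts for the conjugacy problem reduce to the corresponding parts for the word problem.

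\emph{Membership.} For the word problem I would read the input word $w \in \os{a, \ov a, t, \ov t}^*$ as a product of the group elements $(1,0), (-1,0), (0,1), (0,-1) \in \sdZ$, each of which is trivially given in binary representation. \prref{prop:multiplyBS} then yields a uniform $\TC$-circuit computing the product $(r,m)$, and $w = 1$ in $\BS12$ holds exactly when $r = 0$ and $m = 0$; testing that a binary integer and a unary integer vanish is an $\mathsf{AC}^0$ test, so the whole procedure lies in $\TC$. For the conjugacy problem I would likewise compute from the two input words the pairs $(r,m)$ and $(s,q)$ via \prref{prop:multiplyBS} and then feed them into the $\TC$-circuit of \prref{prop:conjBS} that decides $\sim_H$; composing two uniform $\TC$-circuits stays in $\TC$. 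This is precisely the direction for which \prref{prop:multiplyBS} and \prref{prop:conjBS} (and, through the latter, Hesse's $\TC$-algorithm for division) were prepared, so membership is essentially an assembly step.

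\emph{Hardness.} It suffices to prove $\TC$-hardness for the word problem, because $w = 1$ in $\BS12$ holds \IFF $w \sim_{\BS12} 1$, so (after the harmless cyclic reduction of the input) the word problem $\mathsf{AC}^0$-reduces to the conjugacy problem by pairing the input with the fixed word $1$. For the word problem I would reduce a standard $\TC$-complete counting problem, namely iterated addition of binary integers (see \cite{Vollmer99}). The feature of $\BS12$ that makes this possible in constant depth is that $\langle a \rangle \cong \Z$ while the relation $ta\ov t = a^2$ supplies the short words $t^j a t^{-j}$, which evaluate to $a^{2^j}$, i.e.\ to the place value $2^j$. Thus a binary integer with digits $d_j$ is realized by the word $\prod_j t^j a^{d_j} t^{-j}$, a collection of such blocks realizes a sum, and testing whether the resulting element is trivial decides the corresponding arithmetic identity. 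Since the $t$-scaffolding is input-independent and each digit controls the presence of a single $a^{\pm 1}$, this encoding is a constant-depth (indeed projection-like) reduction, which transfers $\TC$-hardness to the word problem and hence to the conjugacy problem.

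\emph{Main obstacle.} The genuine technical content sits in \prref{prop:multiplyBS} and especially \prref{prop:conjBS}, which are already available; once the normal form $(r,m)$ and the number-theoretic criterion \prref{eq:frida} are in place, membership is immediate. The only point requiring care on the hardness side is to fix a problem that is $\TC$-complete under the constant-depth many-one reductions used here and to verify that the arithmetic encoding above faithfully captures it. I therefore expect the main difficulty to be expository rather than mathematical: making the reduction notion and the choice of complete problem precise, while the encoding itself is routine.
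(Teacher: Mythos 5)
Your membership argument is exactly the paper's: the paper, too, gets ``conjugacy in $\TC$'' by composing \prref{prop:multiplyBS} with \prref{prop:conjBS}, and obtains the word problem as the special instance $w=1 \iff w\sim 1$. Where you genuinely diverge is hardness. The paper never uses the relation $ta\ov t=a^2$ there: it observes that the $\TC$-hard problem \emph{MAJORITY} reduces uniformly to the \emph{unary} word problem of $\Z$, and $\Z=\gen a$ sits inside $\BS12$ (a word over $\os{a,\ov a}$ is trivial in $\BS12$ \IFF it is trivial in $\Z$), so hardness is inherited from the easiest infinite subgroup. Your reduction instead exploits $t^jat^{-j}=a^{2^j}$ to encode \emph{binary} place values and reduces iterated addition of binary integers. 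That is a nice illustration of how $\BS12$ does compressed arithmetic, but it proves nothing extra for this theorem and takes on a heavier burden: you must start from a problem whose hardness survives the kind of reduction you use.

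That burden is where your proposal has a real soft spot, and it is more than ``expository''. What your encoding decides is the \emph{equality} version of iterated addition, ``is $\sum_i x_i=y$?'', because triviality in a group is an equality test. But the version of iterated addition that is known to be $\TC$-hard under constant-depth \emph{many-one} reductions is bit extraction (``is bit $j$ of $\sum_i x_i$ equal to $1$?''; MAJORITY reduces to it by a carry trick), and a single bit query does not translate into a single triviality test. Conversely, reducing a threshold problem like MAJORITY to an equality test naturally produces an OR over polynomially many queries (``$\#_1(b)=k$'' for each $k>n/2$), i.e.\ an $\mathsf{AC}^0$ disjunctive truth-table (Turing) reduction, not a projection. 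The paper's one-line hardness claim has the same character --- its MAJORITY reduction is of this Turing type, which is the standard convention for such completeness statements and is all that is needed --- but your explicit insistence on ``constant-depth many-one, indeed projection-like'' reductions claims more than the argument delivers; it is not known how to make the equality-type target many-one hard. Two minor repairs while you are at it: the blocks $t^ja^{d_j}t^{-j}$ have input-dependent lengths (digit $0$ versus $1$), so to be projection-like you should pad, e.g.\ use $t^ja^{2d_j}t^{-j}$ filled up with $a\ov a$ pairs; and drop the ``harmless cyclic reduction'' in the step from the word problem to conjugacy --- computing reduced forms is itself not an $\mathsf{AC}^0$ operation, and it is unnecessary, since the $\TC$ algorithm via \prref{prop:multiplyBS} and \prref{prop:conjBS} never needs reduced inputs and $w=1 \iff w\sim 1$ holds for arbitrary words.
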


\newcommand{\proofbstc}{
\begin{proof}
 By \prettyref{prop:multiplyBS} and \prettyref{prop:conjBS}, the conjugacy problem can be solved in $\TC$. The word problem is a special instance of the conjugacy problem and  the word problem in $\Z$ is $\TC$-hard in unary notation. This follows 
 because the $\TC$-hard problem \emph{MAJORITY} (see \cite{Vollmer99}) reduces uniformly to the unary word problem in $\Z$.
\qed\end{proof}
 }
 \iflatin \else \proofbstc\fi
\begin{remark}\label{rem:reddivBS}
Let us highlight that integer division can be reduced to the \CP in 
\BS12. For $m \geq 1$ we obtain as a special case of \prref{eq:frida} 
and a  well-known
fact from elementary number theory
\begin{equation}
 (0,m)\sim_H (2^s-1,m) \iff 2^m- 1\mid 2^s-1 \iff m \mid s.
 \label{eq:marie}
\end{equation}
If we allow a \PC representation for integers, then this reduction from division to conjugacy can be computed in polynomial time. Hence,  no elementary 
algorithm is known to solve the
\CP in $\BS12$ in \PC representation, whereas the word problem remains solvable in cubic time by \cite{dlu12higman}.
\end{remark}

\section{Conjugacy in the Baumslag group $\BG$}\label{sec:cpbg}
\label{sec:gersten}
The Baumslag group $\BG$ is an HNN extension of the Baumslag-Solitar group $\BS12$. We make this explicit. We let $\BS12$ be our base group,  generated by  $a$ and $t$. Again, $\BS12$ is abbreviated as $H$. The group $H$ contains infinite cyclic subgroups  $A = \gen a$ and $T = \gen t$
with  
$A \cap T = \os 1$. Let $b$ be a fresh letter which is 
added as a new generator together with the relation 
$bab^{-1} = t$. This defines the Baumslag group $\BG$. It is generated 
by $a,t,b$ with defining relations $tat^{-1} = a^2$ and $bab^{-1} = t$. 
However, the generator $t$ is now redundant and we obtain $\BG$
as a group generated by  $a,b$ with a single defining relation
$bab^{-1}a = a^2 bab^{-1}$. We represent elements of $\BG$
by $\bet$-factorizations. A \ei{$\bet$-factorization}
is written as a word
$z=\gam_{0}\bet_1\gam_{1}\ldots \bet_k\gam_{k}$  with $\bet_i \in \os{b, \ov b}$ and 
$\gam_i \in \os{a, \ov a, t, \ov t}^*$.
% Frequently we identify a $\bet$-factorization with the corresponding word 
% $z = \gam_{0}\bet_1\gam_{1}\ldots \bet_k\gam_{k}$ which can be read as an element in $\BG$. 
The number $k$ is called the 
\ei{$\bet$-length} and is denoted as $\abs{z}_\bet$ (i.e., 
$\abs{z}_\bet = \abs{z}_b + \abs{z}_{\ov b}$).
 A \emph{transposition} of a $\bet$-factorization $z = \gam_{0}\bet_1\gam_{1}\ldots \bet_k\gam_{k}$
is given as $z' = \bet_i\gam_{i}\ldots \bet_k\gam_{k}\gam_{0}\bet_1\gam_{1}\ldots \bet_{i-1}\gam_{i-1}$ for some $1 \leq i \leq k$. Clearly, 
$z\simG z'$ in this case. 
Throughout we identify a power ${c}^{-\ell}$ with ${\ov c}^\ell$ for letters $c$ and $\ell \in \N$.

\noindent{\bf  \Breduction{}s.}
A \ei{\Breduction} considers  some factor 
$\bet\gam\ov \bet$ with $\gam\in \os{a, \ov a, t, \ov t}^*$. There are two cases. First, if $\bet = b$ and 
$\gam =a^\ell$ in $H$ for some $\ell\in \Z$ then the factor $b\gam\ov b$ is replaced by $t^\ell$. Second, if $\bet = \ov b$ and  $\gam =t^\ell$ in $H$ 
for some $\ell\in \Z$ then the factor $\ov b \gam b $
is replaced by $a^\ell$. At most ${\abs z}_\bet$ \Breduction are possible
on a word $z$. Be aware! There can be a non-elementary blow-up in the exponents, see 
\prref{ex:blowup}.  If no \Breduction is possible, then the word 
$x$ is called \ei{\Breduced}. It is called \ei{cyclically \Breduced} 
if $xx$ is \Breduced. Britton reductions are effective because we can 
check whether $\gam =a^\ell$ (resp.\, $\gam =t^\ell$) in $H$. 
Thus, on input $x \in \os{a, \ov a, t, \ov t, b, \ov b}^*$ we can effectively calculate a \Breduced word $\wh x$ with 
$x = \wh x $ in $\BG$. The following assertions are standard facts for HNN extensions, see \cite{LS01}:
\begin{enumerate}
\item If $x$ is \Breduced then $x \in H$ \IFF $\abs{x}_\bet = 0$. 
\item If $x$ is \Breduced and $\abs{x}_\bet = 0$ then $x =1$ in $\BG$  \IFF $x =1$ in $H$.
\item Let $\bet_1\gam_{1}\ldots \bet_k\gam_{k}$ 
and $\bet'_1\gam'_{1}\ldots \bet'_k\gam'_{k}$ be $\bet$-factorizations of \Breduced words $x$ and $y$ such that $k \geq 2$  and $x = y$ in $\BG$.
Then we have  $k=k'$ and $(\bet_1,\ldots, \bet_k) = (\bet'_1,\ldots, \bet'_{k'})$. Moreover,
$\gam'_{1}\in \gam_{1}T$ if $\bet_2 = b$ and 
$\gam'_{1}\in \gam_{1}A$ if $\bet_2 = \ov b$.
\end{enumerate}

\begin{example}\label{ex:blowup}
Define  words 
$w_0=t$ and $w_{n+1} = b\,  w_{n}\,  a\, \ov{w_{n}}\,\ov b$
for $n \geq 0$. Then we have $\abs{w_{n}} = 2^{n+2} -3$ but 
$w_{n} = t^{\tow(n)}$ in $\BG$. 
\end{example}

The \emph{\PC-representation} of a $\bet$-factorization $\gam_{0}\bet_1\gam_{1}\ldots \bet_k\gam_{k}$
%$x = \gam_{0}\bet_1\gam_{1}\ldots \bet_k\gam_{k}$ 
is the sequence
$(\bet_1 \lds  \bet_k)$ and a \PC $(\GG, \del)$ together 
with a sequence of markings $K_0, S_0, L_0  \lds K_k, S_k, L_k$ such that
$[\eps(K_i) , \eps(S_i), \eps(L_i)] = [k_i, s_i, \ell_i]$ is the triple representation of $\gam_i \in H$ for $1 \leq i \leq k$. 
It is known that the \WP of $\BG$ is decidable in cubic time. Actually a more precise statement holds. 
%%%%%%%%%%%%%%%%
\begin{proposition}[\cite{muw11bg,dlu12higman}]\label{prop:WPgersten}
There is a cubic time algorithm which computes on input of a \PC representation of $x = \gam_{0}\bet_1\gam_{1}\ldots \bet_k\gam_{k}$
a \PC representation of a \Breduced word  (resp.\, cyclically \Breduced word) $\wh x$ such that $x = \wh x$ in $\BG$ (resp.\, $x \simG \wh x$). Moreover, the size for the \PC representation of $\wh x$ is linear in the 
size of the \PC representation of $x$.
\end{proposition}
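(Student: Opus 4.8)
The plan is to realize each step of the Magnus--Britton breakdown procedure as a constant number of the power-circuit arithmetic operations provided by \prref{prop:dlu_ijac}, and then to bound both the number of such steps and the growth of the circuit. Throughout, each $\gam_i \in H=\sdZ$ is kept in its triple representation $[k_i,s_i,\ell_i]$, stored as three markings in one shared \PC, so that addition, subtraction, comparison, the shift $(x,y)\mapsto 2^xy$, and extraction of the odd part are all available in quadratic time. Correctness of the eventual output is guaranteed by the standard HNN facts (1)--(3) stated before the proposition, which pin down the reduced form; the work is entirely in the complexity.

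First I would assemble the $H$-arithmetic primitives on top of \prref{prop:dlu_ijac}. Multiplication $(r_1,m_1)(r_2,m_2)=(r_1+2^{m_1}r_2,\,m_1+m_2)$ and inversion $(r,m)^{-1}=(-r\,2^{-m},-m)$ each unfold into a bounded number of shifts and additions on the markings, and re-normalising a product into triple form uses the odd-part operation once. Equally important are three routines: testing $\gam_i\in A$ (i.e.\ $k_i+\ell_i=0$ and, after normalising $s_i$ to be odd, $k_i\geq 0$), testing $\gam_i\in T$ (i.e.\ $s_i=0$), and, when a test succeeds, producing the exponent $\ell$ with $\gam_i=a^\ell$ (resp.\ $t^\ell$) together with the triple of $t^\ell=[0,0,\ell]$ (resp.\ $a^\ell=[0,\ell,0]$). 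Each is again a constant number of the primitives from \prref{prop:dlu_ijac}.

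Next I would implement the \Breduction as one left-to-right scan with a stack. Reading the $\bet$-factorization symbol by symbol, I maintain a \Breduced prefix; when appending $\bet_{i+1}\gam_{i+1}$ creates a pinch $\bet\,\gam\,\ov\bet$ whose middle element $\gam$ lies in $A$ (for $\bet=b$) or $T$ (for $\bet=\ov b$), I delete the two $\bet$-letters, replace the pinch by $t^\ell$ (resp.\ $a^\ell$), and merge it with the two neighbouring $\gam$'s by two $H$-multiplications, then cascade. Since every pinch elimination lowers $\abs{z}_\bet$ by two, there are at most $k/2$ of them, so the total number of appends and merges is $\Oh(k)$, i.e.\ $\Oh(k)$ power-circuit operations overall. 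The cyclically \Breduced variant additionally tests the wrap-around pinch $\bet_k\,\gam_k\gam_0\,\bet_1$: if it reduces, a transposition (which preserves $\simG$) moves it into the interior and the reduction continues, again for $\Oh(k)$ steps.

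The main obstacle is the linear size bound, from which the cubic time then follows: granting it, each of the $\Oh(k)\subseteq\Oh(n)$ operations runs in quadratic time on a circuit of size $\Oh(n)$, for a total of $\Oh(n^3)$. The point to control is that each operation only adjoins a constant number of fresh nodes to the single shared circuit — the markings of the $\gam_i$ merely re-point into the existing structure, and each shift $2^xy$ in a merge needs $\Oh(1)$ new nodes — so the size grows from $\Oh(n)$ to $\Oh(n)$. The delicate part, carrying the genuine content of \cite{muw11bg,dlu12higman}, is to ensure the shift exponents arising during merges are already realised inside the circuit, so that no long chain of new $2^x$-nodes is forced (which would reintroduce exactly the tower-function blow-up of \prref{ex:blowup}). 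This is handled by keeping the \PC in reduced form after each step, so that distinct nodes carry distinct values and compaction keeps the node count proportional to the number of genuinely different powers of two occurring; establishing that this compaction itself stays within quadratic time and additive size growth is where the real effort lies.
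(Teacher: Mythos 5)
First, a point of reference: the paper does not prove \prref{prop:WPgersten} at all; it is imported verbatim from \cite{muw11bg,dlu12higman}, and the remark following it records the history (first $\Oh(n^7)$ in \cite{muw11bg}, lowered to cubic in \cite{dlu12higman}). So your attempt has to be measured against those works. At the level of architecture it reconstructs them correctly: triples $[k_i,s_i,\ell_i]$ stored as markings in one shared \PC, a stack-based left-to-right \Breduction with $\Oh(k)$ pinch eliminations, membership tests for $A$ and $T$ via the odd-part operation, and transpositions for the cyclic case. All of that is sound.

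The genuine gap is exactly the step you flag as ``where the real effort lies,'' and it cannot be bridged with the tools you allow yourself. \prref{prop:dlu_ijac} guarantees quadratic \emph{time} per operation but gives no bound on the \emph{size} of the output circuit; iterating it $\Oh(n)$ times therefore yields neither the linear size bound nor the cubic time bound (with no per-operation size control the iteration gives no polynomial bound at all). Your repair, that ``each operation only adjoins a constant number of fresh nodes,'' is false for the operations of \prref{prop:dlu_ijac} as they stand: to realize $\eps(M)=2^{\eps(M_1)}\cdot\eps(M_2)$ one must create a clone of every node in the support of $M_2$, since each such node needs its successor marking augmented by $M_1$ while the original nodes may be shared by other markings; this adds $\abs{\mathrm{supp}(M_2)}$ nodes, which can be of linear size. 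Likewise, marking addition forces carry propagation and fresh nodes whenever supports overlap. Showing that each \Breduction step can nevertheless be implemented with $\Oh(1)$ additive size growth \emph{and} quadratic time is precisely the incremental reduction/compaction machinery of \cite{dlu12higman}: maintaining a reduced circuit containing a chain of consecutive powers of two, an insertion procedure with an additive size guarantee, and the observation that in triple representation the merges act mostly by additions on exponent markings rather than by shifts. That machinery is the entire content of the cited result, so invoking it as a maintenance invariant assumes what the proposition claims; as written, your proof establishes the reduction of the group-theoretic problem to the data-structure problem, but not the data-structure problem itself.
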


\begin{remark}
 A polynomial time algorithm for the result in \prref{prop:WPgersten} has been given first in \cite{muw11bg}, 
 it has been  estimated by $\Oh(n^7)$. This was lowered in \cite{dlu12higman} to cubic time. 
\end{remark}

\begin{theorem}\label{thm:lea}
The following computation can be performed in time $\Oh(n^4)$.
Input: words $x,y \in \os{a,\ov a, %t,\ov t,
b,\ov b}^*$. 
Decide whether $\abs{\wh x}_\bet >0$ for a cyclically \Breduced  form $\wh x$ of $x$. 
If ``yes'', decide $x \simG y$ and, in the positive case, compute 
a \PC representation of some $z$ such that $z x \ov z = y$ in $\BG$.
\end{theorem}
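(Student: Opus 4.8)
The plan is to combine three ingredients: the cubic-time cyclic \Breduction of \prref{prop:WPgersten}, a version of Collins' Lemma for the HNN extension $\BG=\Gen{H,b}{bab^{-1}=t}$, and the power-circuit arithmetic of \prref{prop:dlu_ijac} together with the $\BS{1}{2}$-conjugacy test of \prref{prop:conjBS}.

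First I would turn the input words into \PC representations of $\bet$-factorizations (each $\gam_i$ is a power $a^{c_i}$ with $\abs{c_i}\le n$, so this costs linear size) and apply \prref{prop:WPgersten} to obtain, in cubic time, \PC representations of cyclically \Breduced forms $\wh x$ and $\wh y$, each of linear size; standard bookkeeping during the cyclic reduction also yields \PC-represented conjugators $p,p'$ with $p x\ov p=\wh x$ and $p' y\ov{p'}=\wh y$ in $\BG$. The $\bet$-sequence of $\wh x$ immediately settles whether $\abs{\wh x}_\bet>0$; if it is $0$ we stop. Otherwise put $k=\abs{\wh x}_\bet$. Since conjugate cyclically \Breduced elements of positive $\bet$-length share the same $\bet$-length, I reject at once unless $\abs{\wh y}_\bet=k$.

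The core step is Collins' Lemma: for cyclically \Breduced words of $\bet$-length $k\ge 1$ one has $\wh x\simG\wh y$ \IFF some transposition $\wh x'$ of $\wh x$ carries the same $\bet$-sequence as $\wh y$ and satisfies $z\,\wh x'\,\ov z=\wh y$ for a single element $z$ of an associated subgroup, with $z\in T$ if $\wh x'$ begins with $b$ and $z\in A$ if it begins with $\ov b$. Only the $k\le n$ transpositions whose $\bet$-sequence equals that of $\wh y$ need to be examined. For a fixed such transposition $\wh x'=\bet_1\gam_1\cdots\bet_k\gam_k$ and $\wh y=\bet_1\gam'_1\cdots\bet_k\gam'_k$ I would solve the equation $z\,\wh x'\,\ov z=\wh y$ for the unknown $z$: pushing $z$ across $\bet_1$ converts it (through $\phi^{\pm1}$) into an element of $H$ that merges into $\gam_1$, and the uniqueness assertion~(3) for HNN extensions forces a chain of coset conditions that propagates once around the cyclic word and must close against the factor $\gam_k\ov z$. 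Because $H=\Z[1/2]\rtimes\Z$ has affine multiplication, this chain collapses into a single linear, respectively conjugacy, condition on $z$ in $H$, which I decide and solve with \prref{prop:dlu_ijac} and the test behind \prref{prop:conjBS}. The degenerate case $k=1$, where assertion~(3) does not apply, reduces directly to conjugacy in $H$ via \prref{prop:conjBS}.

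If some transposition admits a solution $z$, the witness for the original instance is assembled as the \PC-represented product of $p$, the transposition prefix, $z$, and $\ov{p'}$, computed in polynomial time by \PC arithmetic. For the running time, there are $\Oh(n)$ transpositions, each handled by a constant number of cubic-time \PC operations on linear-size circuits, yielding the claimed $\Oh(n^4)$ bound, while the initial cyclic reductions cost only $\Oh(n^3)$. I expect the main obstacle to be the correct set-up and solution of the twisted equation $z\,\wh x'\,\ov z=\wh y$: one must track how the associated-subgroup conjugator propagates through the coset ambiguities of assertion~(3) all the way around the cyclic word, and verify that the resulting single condition on $z$ is genuinely solvable by the $\BS{1}{2}$ arithmetic without any blow-up of the \PC sizes beyond linear.
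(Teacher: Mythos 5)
Your overall skeleton---cyclic \Breduction via \prref{prop:WPgersten}, Collins' Lemma to confine the conjugator to an associated subgroup, a loop over the at most $n$ admissible transpositions, and assembly of the witness by \PC arithmetic---is exactly the paper's route, and your complexity accounting ($\Oh(n)$ transpositions times cubic work) matches. But the two places where you defer to ``$\BS12$ arithmetic'' contain a genuine error and a genuine gap. First, the case $k=1$ does \emph{not} reduce to conjugacy in $H$: Collins forces the conjugator to lie in $A$, so the equation is $a^j\,\ov b (r,m)\,a^{-j} = \ov b(s,q)$, i.e.\ $(0,j)(r,m)(-j,0)=(s,q)$ in $H$, a \emph{twisted} equation in which $j=q-m$ is forced and one then checks the single identity $2^{q-m}r = s + 2^q(q-m)$ (the paper's case $n=1$). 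Testing $(r,m)\sim_H(s,q)$ with \prref{prop:conjBS} instead over-accepts: for $x=\ov b\,(1,1)$ and $y=\ov b\,(3,1)$ (that is, $\gam_1 = at$, $\gam_1' = a^3t$) one has $(1,1)\sim_H(3,1)$, since any two elements $(r,1),(s,1)$ are $H$-conjugate, yet $x\not\simG y$ by the identity above. The same objection hits your claim that for $k\ge 2$ the chain of coset conditions ``collapses into a single linear, respectively conjugacy, condition on $z$ in $H$'': no $H$-conjugacy test may be used anywhere in this theorem, precisely because the conjugator ranges over $A$ (or $T$), not over $H$.

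Second, the step you yourself flag as ``the main obstacle'' is the actual content of the proof, and it is missing. The paper resolves it without propagating anything around the word: it applies assertion~(3) only to the \emph{first} factor of $a^k x a^{-k}$, with a case distinction on the sign $\eps_2$ of the second stable letter. If $\eps_2=+1$, the first factor lies in the coset $t^k(r,m)T$, so its $\Z[1/2]$-coordinate is $2^kr$; choosing the unique $k$ (resp.\ $\ell$ for $y$) making $2^kr$ (resp.\ $2^\ell s$) an odd integer normalizes both sides, and then any conjugating power of $a$ must be trivial. If $\eps_2=-1$, the first factor lies in $t^k(r,m)A$, forcing $k=-m$ (resp.\ $-q$). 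In either case conjugacy becomes \emph{equality} of two explicitly normalized words, which is a word-problem instance decided in cubic time by \prref{prop:WPgersten}; no analysis of \PC-size growth along a chain is needed, because the chain is never solved---only the word problem is invoked. Your plan is repairable along these lines, but as written the $k=1$ case is incorrect and the $k\ge 2$ case is not proved: you must extract the unique exponent from the first coset condition (distinguishing $\eps_2=\pm1$) and then delegate the remaining verification to \prref{prop:WPgersten} rather than to \prref{prop:conjBS}.
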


\newcommand{\inprooflea}{ 
 \noindent{\bf Case $n=1$.} 
 We have $x = \ov b (r,m)$ and $y = \ov b (s,q)$
 for some $(r,m), (s,q) \in \sdZ$.
  Now, $a^k x = y a^k$ in $\BG$ \IFF  $(0,k)(r,m) = (s,q) (k,0)$. 
  This forces $k = q-m$. Hence
    \begin{equation}\label{eq:futz} 
x \simG y \iff 2^{q-m} r = s + 2^q(q-m) %\iff 2^{-m} r- 2^{-q}s = q-m.
\quad \text{for $n=1$.}
\end{equation}
\noindent{\bf Case $n \geq 2$ and $\eps_2 =+1$.} 
Then 
 $x= \ov b (r,m) b\gam_2\cdots b^{\eps_n}\gam_n$ and 
 $y= \ov b (s,q) b\gam'_2\cdots b^{\eps_n}\gam'_n$. 
 We have  $r \neq 0 \neq s$ since 
 $x$ and $y$ are  \Breduced. 
%  For every $k \in \Z$ we can write  
%  $a^k x{\ov a}^k$ in some  \Breduced form which looks like  
%  $\ov b \, {\wt \gam}_1 b\cdots  b^{\eps_n}{\wt \gam}_n$.
%  This implies  however 
%  ${\wt \gam}_1 \in t^{k} (r,m) T$ and  ${\wt \gam}_1= (2^k r, p)$ for some $p \in \Z$. 
 For every $k \in \Z$ and every \Breduced  $\bet$-factorization 
 $\ov b  {\wt \gam}_1 b  \ldots  b^{\eps_n} {\wt \gam}_n$ for $a^k x{\ov a}^k$ we have
 ${\wt \gam}_1 \in t^{k} (r,m) T$, and hence ${\wt \gam}_1= (2^k r, p)$ for some $p \in \Z$. 
 We conclude that there is a unique $k \in \Z$ 
 such that $a^k x{\ov a}^k = \ov b \, (2^k r,p) b \cdots b^{\eps_n}{\wt \gam}_n \in \BG$, $p \in \Z$,  and $2^k r$ is an odd integer. 
 This means we may assume from the very beginning that $r$ and $s$ are odd integers. 
 Under this assumption, if  $a^k x a^{-k} = y$ in $\BG$ then  necessarily $k=0$ and hence $x = y$ in $\BG$. 
 We obtain the following algorithm to decide 
 $x \simG y$. %for $n\geq 2$ and $\eps_1 =+1$.
 \begin{itemize}
\item For $\gam_1= (r,m)$ and $\gam'_1= (s,q)$
calculate unique $k,\ell \in \Z$ such that 
$2^k r$ and $2^{\ell}s $ are odd integers.
\item Decide whether $a^k x{\ov a}^k = a^\ell y{\ov a}^{\ell}  \in \BG$.
If ``yes'' then $x\simG y$ otherwise $x \not\simG y$.
\end{itemize}
{\bf Case $n \geq 2$ and $\eps_2 =-1$.} Then  
 $x= \ov b (r,m) \ov b\, \gam_2\cdots b^{\eps_n}\gam_n$ and 
 $y= \ov b (s,q) \ov b\, \gam'_2\cdots b^{\eps_n}\gam'_n$. 
 For every $k \in \Z$ we can write  
 $a^k x{\ov a}^k$ in some  \Breduced form which looks like  
 $\ov b \, {\wt \gam}_1 \ov b\cdots  b^{\eps_n}{\wt \gam}_n$.
Now,  ${\wt \gam}_1 \in t^{k} (r,m) A$. Thus, there is a unique $k \in \Z$ (necessarily 
 $k = -m$) such that ${\wt \gam}_1 = (p,0)$ for some $p \in \Z[1/2]$. Using the same arguments as above, 
 we obtain the following algorithm.
For $\gam_1= (r,m)$ and $\gam'_1= (s,q)$ decide whether ${a}^{-m} x a^m =  {a}^{-q} y a^q \in \BG$.
If ``yes'' then $x \simG y$ otherwise $x \not\simG y$.
}

\begin{proof}
 Due to \prref{prop:WPgersten},
we may assume that input words $x$ and $y$ are given as 
cyclically \Breduced words. In particular, $\wh x = x$ and 
$\abs{\wh x}_\bet =n > 0$. 
Let us write $x = \gam_0b^{\eps_1}\gam_1 \ldots b^{\eps_n}\gam_n$ as its  $\bet$-factorization
where $\eps_i = \pm 1$.  
If all $\eps_i =+1$ then we replace $x$ and $y$ by 
$\ov x$ and $\ov y$. Hence, without restriction there exists some $\eps_i =-1$. After a possible transposition we may assume that 
$x= b^{\eps_1} \gam_1 \cdots b^{\eps_n}\gam_n$ with $\eps_1 =-1$.  Since $y$ is cyclically \Breduced, too, Collins' Lemma (\cite[Thm.~IV.2.5]{LS01}) tells us several things: 
If $x\simG y$ then $\abs{y}_\bet =n$ and after some transposition 
the $\bet$-factorization of $y$ can be written as  $b^{\eps_1}\gam'_1 \cdots b^{\eps_n}\gam'_n$.  
%Thus, the sequences $\eps_1, \ldots , {\eps_n}$ coincide with $\eps_1 =-1$.
 Moreover, still by Collins' Lemma, we now have 
 $x \simG y \iff \exists k \in \Z: y = a^k x a^{-k}$ in $\BG$.  The key  is  that  $k$ is unique and that we find an efficient way to calculate it\iflatin,\else.\fi\footnote{Beese calculates in \cite{beese12} this value $k$ and computes certain normal forms which are checked for equivalence. This leads  to an exponential time algorithm. } \iflatin see the appendix for the calculations. \fi

\iflatin\else \inprooflea \fi %A proof for this fact can be found in the appendix.

By \prref{prop:WPgersten}, the tests $a^k x{\ov a}^k =  y  \in \BG$ can be performed in cubic time. All other computations can be done in quadratic time by \prref{prop:multiplyBS}. Since all transpositions of the $\bet$-factorization for $y$ have to be considered this yields an $\Oh(n^4)$-algorithm. 
\qed\end{proof}

For the remainder of the section the situation is as follows: 
We have $x =(r,m) \in \sdZ$ and  $y =(s,q) \in \sdZ$, both can be assumed to be  in 
\PC representation. We may assume $x \neq 1 \neq y$ in $\BG$. After conjugation with some $t^k$ where $k$ is large enough we may assume that $r,m,s,q \in \Z$. 
If $m=0$ then we replace $x$ by $b x \ov b$. Hence, $m \neq 0$ and, by symmetry, $q \neq 0$, too. 
By \prref{eq:frida} and  ``division in \PC{}s'', we are able to 
 to test whether $(r,m)\sim_H (0,m)$ and $(s,q)\sim_H (0,q)$.
Assume that one of the answers is ``no''. Say, $(r,m)\not\sim_H (0,m)$.
Then there is no $h \in A \cup T \sse H$ such that 
$(r,m)\sim_H h$. Since then $\bet \gam (r,m) \ov \gam \ov \bet$ is \Breduced
for all $\bet\in \os{b, \ov b}$, $\gam\in \os{a, \ov a, t, \ov t}^*$
we obtain:
%%%%%%%%%%%%%%%%%%%%%%%%%%%%%%%%%%%%%%%%%%%%%%%%
\begin{proposition}\label{prop:nixH}
Let $r,m \in \Z$, $m \neq 0$. If $(r,m)\not\sim_H (0,m)$ then $$(r,m)\simG (s,q) \iff (r,m)\sim_H (s,q).$$
\end{proposition}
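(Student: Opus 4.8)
The plan is to prove the two implications separately. The direction ``$\Leftarrow$'' is immediate: since $H$ is a subgroup of $\BG$, any $H$-conjugator also witnesses conjugacy in $\BG$, so $(r,m)\sim_H(s,q)$ trivially gives $(r,m)\simG(s,q)$. All the work is in the forward direction.

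For ``$\Rightarrow$'', suppose $(r,m)\simG(s,q)$ and pick a conjugator $z\in\BG$ with $z(r,m)\ov z=(s,q)$ in $\BG$. Replacing $z$ by a \Breduced form (which does not change the element), I would write $z=\gam_0 b^{\eps_1}\gam_1\cdots b^{\eps_n}\gam_n$ with $n=\abs{z}_\bet$. The key step is to examine the word
\[
W=\gam_0 b^{\eps_1}\cdots b^{\eps_n}\,\bigl(\gam_n(r,m)\ov{\gam_n}\bigr)\,b^{-\eps_n}\cdots b^{-\eps_1}\ov{\gam_0}
\]
and to show that it is \emph{already} \Breduced. I would read off its $\bet$-sequence $\eps_1,\ldots,\eps_n,-\eps_n,\ldots,-\eps_1$ and check the three kinds of junctions: the junctions internal to the $z$-part and to the $\ov z$-part cannot pinch because $z$, and hence also $\ov z$, is \Breduced; the single central junction $b^{\eps_n}\bigl(\gam_n(r,m)\ov{\gam_n}\bigr)b^{-\eps_n}$ cannot pinch either, because $\gam_n(r,m)\ov{\gam_n}\sim_H(r,m)$ and, by hypothesis, $(r,m)$ is $H$-conjugate to no element of $A\cup T$, so $\gam_n(r,m)\ov{\gam_n}\notin A\cup T$. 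This is exactly the observation recorded before the statement, namely that $\bet\gam(r,m)\ov\gam\ov\bet$ is \Breduced for all $\bet,\gam$.

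Once $W$ is seen to be \Breduced, the conclusion is forced: $W$ represents $(s,q)\in H$, so the standard fact for HNN extensions (item~1: a \Breduced word lies in $H$ iff its $\bet$-length is $0$) gives $\abs{W}_\bet=2n=0$, whence $n=0$ and $z\in H$. Then $z(r,m)\ov z=(s,q)$ is an equality between elements of $H$ that holds in $\BG$; since $H$ embeds in $\BG$ (equivalently, item~2 applied to the \Breduced $\bet$-length-$0$ word $z(r,m)\ov z(s,q)^{-1}$), it already holds in $H$, i.e.\ $(r,m)\sim_H(s,q)$.

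I expect the only genuine obstacle to be the pinch analysis of the central junction, and this is precisely where the hypothesis $(r,m)\not\sim_H(0,m)$ enters: it must be converted into the statement that $(r,m)$ is $H$-conjugate to no element of $A\cup T$. For $m\neq0$ this conversion is routine — the second coordinate is a conjugacy invariant in $H=\sdZ$, so an $H$-conjugate of $(r,m)$ landing in $A\cup T$ would be forced to equal $(0,m)\in T$ — and it has already been carried out in the paragraph preceding the proposition. Everything else is bookkeeping with \Breduced words.
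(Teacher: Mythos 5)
Your proof is correct and takes essentially the same route as the paper: the paper derives the proposition directly from the observation (stated in the paragraph immediately preceding it) that $\bet \gam (r,m) \ov \gam \ov \bet$ is \Breduced for all $\bet\in \os{b, \ov b}$, $\gam\in \os{a, \ov a, t, \ov t}^*$, which is exactly your central-junction analysis, combined with Britton's lemma forcing any \Breduced conjugator to have $\bet$-length $0$ and hence lie in $H$. You have simply written out in full the bookkeeping (conversion of the hypothesis into ``no $H$-conjugate in $A\cup T$'', the junction check on $z(r,m)\ov z$, and the embedding of $H$ into $\BG$) that the paper leaves implicit.
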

By \prref{prop:nixH}, we may assume  $(r,m)\sim_H (0,m)$, $(s,q)\sim_H (0,q)$, and $(r,m)\not\sim_H (s,q)$. This involves perhaps non-elementary 
procedures. However, it remains to decide 
$(0,m)\simG(0,q)$, only.  The last test is polynomial time again, even for \PC{}s. 
%%%%%%%%%%%%%%%%%%%%%%%%%%%%%%%%%%%%%%%%%%%%%%%%
\begin{proposition}\label{prop:verdi}
Let $m,q \in \Z$. Then we have 
$$(0,m)\simG (0,q) \iff (m,0)\sim_H (q,0) \iff \exists k \in \Z: m = 2^k q.$$
\end{proposition}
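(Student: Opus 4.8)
The plan is to prove the triple equivalence by splitting it as First $\Leftrightarrow$ Second and Second $\Leftrightarrow$ Third, using the defining HNN relation to bridge the elements $(0,m) = t^m$ and $(m,0) = a^m$ (and likewise for $q$).

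First I would dispose of Second $\iff$ Third. Both $(m,0)$ and $(q,0)$ have zero $\Z$-component, so this is exactly the case ``$m=0$'' inside the proof of \prref{prop:conjBS}, where we saw that $(r,0)\sim_H(s,0)$ holds iff $s = 2^k r$ for some $k\in\Z$. Substituting $r=m$, $s=q$ and replacing $k$ by $-k$ gives $(m,0)\sim_H(q,0) \iff \exists k\in\Z:\, m = 2^k q$.

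Next comes the bridge for the first equivalence. The defining relation $bab^{-1}=t$ yields $ba^nb^{-1}=t^n$ for every $n\in\Z$, hence $a^n\simG t^n$; in particular $(m,0)\simG(0,m)$ and $(q,0)\simG(0,q)$. By transitivity of $\simG$ this reduces First to the claim $(m,0)\simG(q,0) \iff (m,0)\sim_H(q,0)$, i.e.\ to showing that $\BG$-conjugacy of the $a$-powers $a^m,a^q\in A\leq H$ coincides with their $H$-conjugacy. One direction is free since $H\leq\BG$.

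The hard direction, and the main obstacle, is $a^m\simG a^q \Rightarrow a^m\sim_H a^q$: I must rule out that a conjugator of positive $\bet$-length relates $a^m$ and $a^q$ without an underlying $H$-conjugacy. This is precisely the conjugacy analysis for base-group elements of an HNN extension (the Britton--Collins machinery behind \cite[Thm.~IV.2.5]{LS01}): if $u,v\in H$ are conjugate in the extension, one passes from $u$ to $v$ by an alternating sequence of $H$-conjugations and transfers through the associated subgroups, applying $b(\cdot)b^{-1}$ to an element lying in $A$ (landing in $T$) or $b^{-1}(\cdot)b$ to an element lying in $T$ (landing in $A$). Tracing this orbit starting from $a^m$, the decisive structural fact is that the $\Z$-component is an $H$-conjugacy invariant while $A\cap T=\os 1$: a nontrivial $H$-conjugacy class lies entirely in $\Z$-component $0$ (so it can meet $A$ but meets $T$ only in $1$) or entirely in a fixed nonzero $\Z$-component (so it can meet $T$ but meets $A$ only in $1$). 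Hence every admissible transfer has the form $a^{m'}\leftrightarrow t^{m'}$, and combining this with the $H$-classification from \prref{prop:conjBS} shows that the orbit of $a^m$ inside $H$ meets $A$ exactly in $\set{a^{m2^k}}{k\in\Z}$. Therefore $a^q$ lies in it iff $q=2^k m$ for some $k$, which is the third condition and forces $a^m\sim_H a^q$ by the first step. Chaining the three links yields the equivalence; the trivial cases $m=0$ or $q=0$ (where all three conditions collapse to $q=0$, resp.\ $m=0$) are checked directly.
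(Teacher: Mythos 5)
Your proof is correct, but it is organized differently from the paper's. The paper proves the hard implication $(0,m)\simG (0,q) \Rightarrow (m,0)\sim_H (q,0)$ by a self-contained minimal-conjugator argument: using $(q,0)\simG(0,q)$ it chooses $z$ of minimal $\bet$-length with $\ov z\,(q,0)\,z=(0,m)$ and runs Britton reductions on this single word, showing successively that $\eps_1=-1$, that $\gam_1$ may be taken trivial (this step uses exactly the key fact you isolate, namely that for $p\neq 0$ the $H$-class of $t^p$ meets $A\cup T$ only in $t^p$), and that minimality of $n$ forces $n=1$; hence $z=\gam\ov b$ and $\ov\gam\,(q,0)\,\gam=(m,0)$. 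You instead transport everything to the $A$-side via the bridge $a^n\simG t^n$ and then invoke the zero-length case of Collins' Lemma -- conjugacy of two base-group elements of an HNN extension is realized by a chain of $H$-conjugations and transfers $\phi^{\pm1}$ through the associated subgroups -- closing the argument with an orbit computation based on two invariants of $H=\BS12$: the $\Z$-component is preserved by $H$-conjugacy, and $A\cap T=\os 1$. What your route buys is modularity and generality: the orbit analysis characterizes $\BG$-conjugacy for all powers of $a$ and $t$ at once and would work in any HNN extension with the analogous separation properties. What it costs is reliance on the $\abs{u}_\bet=0$ case of the conjugacy theorem for HNN extensions, which the paper never states (it cites Collins' Lemma only in its length-$\geq 1$ form, in the proof of \prref{thm:lea}); the paper's minimal-conjugator argument is self-contained and in effect reproves the one instance of that chain lemma it needs. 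One cosmetic point: in your orbit description $\set{a^{m2^k}}{k\in\Z}$ you should restrict to those $k$ with $m2^k\in\Z$, but this does not affect the conclusion.
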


\newcommand{\proofverdi}{
\begin{proof} 
The assertion $(m,0)\sim_H (q,0) \iff \exists k \in \Z: m = 2^k q$ is clear since $(m,0)= a^m$ and $(q,0)= a^q$ in $H = \BS12$.
Let $(0,m)\simG (0,q)$. We have to show $(m,0)\sim_H (q,0)$
since the other direction is trivial. 
We have  $(q,0)\simG(0,q)$.
 Let 
$\gam_0b^{\eps_1}\gam_1 \cdots b^{\eps_n}\gam_n$ be a $\bet$-factorization of some $z$ 
with $n\in \N$ minimal such that $\ov z (q,0) z = (0,m)$. 
Since $\ov {\gam_0} (q,0) \gam_0 = (p,0)$ for some $p\neq 0$, 
we  have $n\geq 1$ and  $\eps_1 = -1$ because there has to occur a Britton reduction. Thus,  $b \ov {\gam_0} (q,0) \gam_0 \ov b= t^p$ in $\BG$. Now, $\ov {\gam_1} (0,p) \gam_1 \in A \cup T$
\IFF $\ov {\gam_1} (0,p) \gam_1 = (0,p)$. Thus, we may assume $\gam_1 = 1$ in $H$. Since $n$ is minimal we cannot have $\eps_2= +1$. Thus, we must have $n = 1$ and we may choose $z = \gam \ov b$ for some $\gam \in H$. 
This means $\ov z (q,0) z = b \ov \gam(q,0) \gam \ov b = (0,m)$
which implies $(m,0)\sim_H (q,0)$. % since $\ov \gam(q,0) \gam = (m,0)$.
\qed\end{proof}
}
\iflatin \else \proofverdi \fi

\begin{corollary}\label{thm:flo}
The following problem is decidable in at most non-elementary time.
Input: Power circuit representations $x,y$ for elements of $\BG$.
Question: $x \simG y$? 
\end{corollary}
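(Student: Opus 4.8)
The statement is a corollary, so the plan is to assemble the algorithmic pieces of this section into one procedure and then to verify that the only source of super-polynomial cost is ``division in \PC{}s'', which is itself bounded by a tower of exponentials. First I would feed the power-circuit inputs $x$ and $y$ into \prref{prop:WPgersten} to obtain, in cubic time, power-circuit representations of cyclically \Breduced forms $\wh x$ and $\wh y$ with $x\simG \wh x$ and $y \simG \wh y$, and I would read off $\abs{\wh x}_\bet$ and $\abs{\wh y}_\bet$. By Collins' Lemma conjugate cyclically \Breduced words have equal $\bet$-length, so if these two numbers differ I answer ``not conjugate''.

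If $\abs{\wh x}_\bet>0$ I am in the regime of \prref{thm:lea}. Its algorithm rests only on \prref{prop:WPgersten} and on the arithmetic of \prref{prop:dlu_ijac} (extracting the odd part of $r$ to pin down the unique $k$, and testing $a^kx\ov a^k = y$ in $\BG$), all of which operate directly on power circuits; hence this branch decides $x\simG y$ in polynomial time. The genuinely hard case is $\abs{\wh x}_\bet=\abs{\wh y}_\bet=0$, where $\wh x=(r,m)$ and $\wh y=(s,q)$ lie in $H=\sdZ$ (the subcase $x=1$ or $y=1$ being just the word problem).

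Here I would follow the discussion after \prref{thm:lea}: conjugate by a power of $t$, a single $2^{(\cdot)}$-multiplication that is quadratic by \prref{prop:dlu_ijac}, to force $r,m,s,q\in\Z$, and replace $x$ by $bx\ov b$ if $m=0$, so that $m\neq 0\neq q$. I then apply \prref{eq:frida}, i.e.\ the division primitive, to decide $(r,m)\sim_H(0,m)$ and $(s,q)\sim_H(0,q)$. If either test fails, \prref{prop:nixH}---used on $x$, or on $y$ by symmetry of $\simG$---collapses $\BG$-conjugacy to $H$-conjugacy, again decided through \prref{eq:frida}; if both succeed, then $x\simG(0,m)$ and $y\simG(0,q)$, so it remains to test $(0,m)\simG(0,q)$, which \prref{prop:verdi} turns into the polynomial-time (even on power circuits) condition $\exists k\in\Z: m=2^kq$.

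I expect the decisive point to be the running-time accounting rather than correctness, since correctness is immediate from the cited propositions. Every step above is polynomial except the applications of \prref{eq:frida}, which unfold the integers coded by the markings and divide them. By \prref{ex:powtow} a power circuit of size $n$ evaluates to an integer of magnitude at most $\tow(\Oh(n))$; converting the relevant markings to binary therefore yields numbers of bit-length $\tow(\Oh(n))$, on which division is polynomial in the bit-length (Hesse's result, or naive long division). Thus each division step, and hence the whole algorithm, runs in time $\tow(\Oh(n))$, which is precisely ``at most non-elementary'', and the non-elementary behaviour is confined to and inherited from ``division in \PC{}s''.
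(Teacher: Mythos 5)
Your proposal is correct and follows essentially the same route as the paper, which derives this corollary by assembling \prref{thm:lea} (the polynomial-time case $\abs{\wh x}_\bet>0$), \prref{prop:nixH}, \prref{prop:verdi}, and the division test of \prref{eq:frida}, with the non-elementary cost confined to unfolding power-circuit markings into binary before dividing. The only cosmetic slip is citing \prref{ex:powtow} (a lower-bound example) for the upper bound $\tow(\Oh(n))$ on evaluations; that bound is standard and is what the paper tacitly uses, so this is not a gap.
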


\begin{corollary}\label{cor:divdif}
If there is no elementary algorithm to solve the division problem in \PC{}s  then the \CP  in the Baumslag group $\BG$ is non-elementary in the average case even for a unary representation of group elements. 
\end{corollary}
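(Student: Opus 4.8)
The plan is to prove the contrapositive by combining a reduction of the division problem in \PC{}s with an averaging argument in which a \emph{single}, astronomically hard instance already pulls the average up. Throughout, ``elementary'' means bounded by a fixed tower of exponentials.

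\textbf{The reduction.} Given \PC{}s for integers $u \geq 1$ and $v$, I want to decide $u \mid v$. By \prref{eq:marie} this is exactly the question whether $(0,u) \sim_H (2^v-1,u)$ in $H = \BS12$, and by \prref{prop:dlu_ijac} the markings for $2^v-1$ and $u$ are computable from the input in polynomial time (the \PC\ content of \prref{rem:reddivBS}). The point is that here $H$-conjugacy coincides with \BG-conjugacy: if $(2^v-1,u) \not\sim_H (0,u)$ then $(2^v-1,u) \simG (0,u) \iff (2^v-1,u) \sim_H (0,u)$ by \prref{prop:nixH}, while in the positive case $H$-conjugacy trivially implies \BG-conjugacy. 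Hence $(0,u) \simG (2^v-1,u) \iff u \mid v$.

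\textbf{From \PC\ to unary.} The delicate point is to realise this reduction with \emph{words over the generators} $\os{a,\ov a,b,\ov b}$ without a non-elementary blow-up. Spelling $(0,u) = t^u$ and $(2^v-1,u)$ out over $\os{a,\ov a,t,\ov t}$ would cost length proportional to the values $u$ and $2^v$, which is non-elementary; this is precisely where \BG\ beats \BS12. Using $t = bab^{-1}$ and $tat^{-1}=a^2$ one nests $b$'s as in \prref{ex:blowup}, where $w_n = b\,w_{n-1}\,a\,\ov{w_{n-1}}\,\ov b$ realises $t^{\tow(n)}$ in length $2^{\Oh(n)}$, and, following the dag of the circuit node by node, builds \BG-words of length $2^{\Oh(n)}$ representing $t^u$ and $a^{2^v-1}t^u$ from a \PC\ of size $n$. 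The main obstacle I foresee is making this \PC-to-compact-word construction precise for an arbitrary power circuit (the general form of \prref{ex:blowup}, i.e.\ the tower-like distortion of $\BS12$ in \BG); what matters for the rest is only that the blow-up is \emph{exponential}, so that a word of length $N$ carries a circuit of size $n = \Theta(\log N)$.

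\textbf{The averaging.} Fix any algorithm $\cA$ deciding $\simG$ on unary inputs and let $\cB_\cA$ be the division algorithm that runs the reduction above and then calls $\cA$. Since $\cB_\cA$ solves division in \PC{}s, by hypothesis it is non-elementary: for every fixed $k$ there are infinitely many $n$ and circuits of size $n$ on which $\cB_\cA$, hence $\cA$ on the encoded word $w$ of length $N = 2^{\Oh(n)}$, runs longer than the $k$-fold iterated exponential of $n$. One such $w$ already suffices: as $D^{(N)}$ contains only $2^{\Oh(N)}$ instances, $\mathrm{av}_\cA(N) \geq t_\cA(w)/2^{\Oh(N)}$. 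Now $n = \Theta(\log N)$ and $2^{\Oh(N)} = 2^{2^{\Oh(\log N)}}$ is merely a \emph{two-fold} iterated exponential in $\log N$; dividing a quantity of unbounded tower-height by it lowers the height only by a bounded amount, so $\mathrm{av}_\cA(N)$ still exceeds every fixed tower of exponentials in $N$ infinitely often. Equivalently, if $\mathrm{av}_\cA$ were elementary then $t_\cA(w) \leq 2^{\Oh(N)} \cdot \mathrm{av}_\cA(N)$ would be elementary in $N = 2^{\Oh(n)}$, hence elementary in $n$, making $\cB_\cA$ elementary and contradicting the hypothesis. That the hard instances are exponentially sparse — they must reduce into $\BS12$, as in the strong-genericity result — is harmless precisely because the cost of the single instance $w$ dominates the $2^{\Oh(N)}$ normalisation.
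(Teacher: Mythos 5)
Your proposal is correct and follows essentially the same route as the paper's own proof: the contrapositive, the reduction of division in \PC{}s to conjugacy via \prref{eq:marie}, the encoding of circuit nodes as words over $\os{a,\ov a, b,\ov b}$ of exponential length following the scheme of \prref{ex:blowup}, and the observation that one astronomically hard instance dominates the average because there are only $2^{2^{\Oh(n\log n)}}$ words of the relevant length. If anything, your write-up is more careful at two points the paper glosses over: you invoke \prref{prop:nixH} to justify that $\sim_H$ and $\simG$ coincide on the constructed instances (the paper cites only \prref{rem:reddivBS}, which speaks of $\sim_H$), and you make explicit the tower-height bookkeeping showing that dividing by the doubly exponential normalisation cannot rescue elementarity.
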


\begin{proof}
Assume that the \CP  in the Baumslag group $\BG$ is elementary
on the average. We give an  elementary algorithm to solve division in \PC{}s. 
 Let $(\GG, \del)$ be a \PC of size $n$ with markings $M$ and $S$ such that 
 $\eps(M) = m $ and $\eps(S) = s$. 
 For each node in $P \in \GG$ it is easy to construct a 
 word $w(P) \in \os{a,\ov a, b,\ov b}^*$ such that 
 $t^{\eps(P)}= w(P)$ in $\BG$ and $\abs{w(P)} \leq n^n$. Just follow the scheme from \prref{ex:blowup}. Hence,  in time $2^{\Oh(n \log n)}$ we can construct 
 words $x$ and $y$ such that $x = (0,m)$ and $y = (2^s-1,m)$ in $\BG$. 
 Now by \prref{rem:reddivBS} we have $m \mid s$ \IFF $x \simG y$. 
 The number of words of length $2^{\Oh(n \log n)}$ is at most 
 $2^{2^{\Oh(n \log n)}}$. %This yields the result. 
 \qed\end{proof}

\section{Generic case analysis}\label{sec:generic}
 
 \newcommand{\PrO}[2]{\mathrm{Pr}_{0,n}^{#1}\hspace*{-0.1pt}\left[ \mathinner{#2}\right]}
 \newcommand{\Prk}[2]{\mathrm{Pr}_{#1}\hspace*{-0.1pt}\left[ \mathinner{#2}\right]}

\newcommand{\Prn}[2]{\mathrm{Pr}_{#1}\hspace*{-0.1pt}\left[ \mathinner{#2}\right]}
 \newcommand{\condProb}[3]
 {\mathrm{Pr}_{#1}\hspace*{-0.1pt}\left[ \mathinner{#2}\mathrel{\left|\vphantom{#2}\vphantom{#3}\right.} {#3}\right]}
 
 Let us  define a preorder between functions from $\N$ to $\R^{\geq 0}$ as follows. We let  
$f\preceq g$ if there exist $k \in \N$ and $\eps > 0$ such that
for almost all $n$ we have 
$$ f(n) \leq n^k g(n) + 2^{-\eps n}.$$
Moreover, we let $f\approx g$ if both, $f\preceq g$ and $g\preceq f$.
We are mainly interested in functions $f\approx 0$. These functions form an ideal in the ring of functions which are bounded by polynomial growth.
Moreover, if $f\approx 0$ then $g\approx 0$ for  $g(n) \in f(\theta(n))$.
The notion $f\approx g$ is therefore rather flexible and simplifies some formulae. 
We consider cyclically reduced words over $\Sig = \{a,\ov a, b, \ov b\}$ of length $n$ with uniform distribution. 
This yields a function $p(n) = \Prob{\exists y: x \simG y \wedge y\in H}$. 
We prove $p(n) \approx 0$. More precisely, we are interested in  the following result.

\begin{theorem}\label{thm:gen}
There is a strongly generic algorithm that decides in time $\Oh(n^4)$
on cyclically reduced input words  $x,y\in \{a,\ov a, b, \ov b\}^*$ with $\abs{xy} \in \theta(n)$ whether $x \simG y$. 
\end{theorem}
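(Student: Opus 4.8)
The plan is to branch on the $\bet$-length of a cyclically \Breduced form and to let \prref{thm:lea} do the work on the ``generic'' branch. First I would run the algorithm of \prref{thm:lea} on $(x,y)$. It computes a cyclically \Breduced form $\wh x$, reports $\abs{\wh x}_\bet$, and, whenever $\abs{\wh x}_\bet > 0$, decides $x\simG y$ in time $\Oh(n^4)$. So if $\abs{\wh x}_\bet > 0$ I simply return that verdict. Otherwise I compute a cyclically \Breduced form $\wh y$ in cubic time using \prref{prop:WPgersten}: if $\abs{\wh y}_\bet > 0$ I return ``no'' (conjugate words have equal cyclically \Breduced $\bet$-length by Collins' Lemma), and if $\abs{\wh y}_\bet = 0$ I fall back on the totally defined, possibly non-elementary procedure of \prref{thm:flo}. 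I declare the \emph{generic set} $I$ to consist of those inputs with $\abs{\wh x}_\bet > 0$ or $\abs{\wh y}_\bet > 0$.

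This algorithm is total and always correct, since the only branch that can be slow, namely $\abs{\wh x}_\bet = \abs{\wh y}_\bet = 0$, still returns the right answer via \prref{thm:flo}. On $I$ the running time is $\Oh(n^4)$: either \prref{thm:lea} answers directly, or the input is disposed of by one further cubic reduction and a comparison of $\bet$-lengths. Hence it only remains to show that $I$ is strongly generic.

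By Britton's Lemma a \Breduced word has $\bet$-length $0$ iff it lies in $H=\BS12$, and the element represented by a cyclically \Breduced word is conjugate into $H$ iff that word has $\bet$-length $0$. Therefore $(x,y)\notin I$ exactly when both $x$ and $y$ are conjugate into $H$, and the density of the complement of $I$ is at most $p(\abs x)\,p(\abs y)$, where $p(n)=\Prob{\exists y: x\simG y \wedge y\in H}$ is the probability that a random cyclically reduced word of length $n$ is conjugate into $H$. Everything thus reduces to the single estimate $p(n)\le (1-\eps)^n$ for some $\eps>0$: granting it, the complement has density at most $(1-\eps)^{\abs x+\abs y}\le 2^{-\eps' n}$ for some $\eps'>0$ (using $\abs{xy}\in\theta(n)$, and the stability of $\approx$ under $g(n)\in f(\theta(n))$), so $I$ is strongly generic and the $\Oh(n^4)$ bound on $I$ yields the theorem.

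The estimate $p(n)\le (1-\eps)^n$ is the main obstacle and the genuine content of the statement: it says that a random reduced word escapes $\BS12$ up to conjugacy with overwhelming probability, equivalently that the Schreier graph $\GG(\BG,H,\Del)$ is non-amenable. I would prove it in either of the two independent ways carried out later: the random-walk analysis of \prref{sec:backbase}, which shows geometric decay of the probability of returning (up to conjugacy) into $H$; or the combinatorial pairing of \prref{sec:dyckwordpair}, which assigns to each reduced word a Dyck word recording its $b^{\pm 1}$-pattern, bounds for each fixed Dyck word of length $2n$ the probability of ``evaluating to $1$'' by $(1/4-\eps)^n$, and sums over the at most $4^n$ Dyck words of that length.
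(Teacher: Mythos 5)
Your proposal is correct and is essentially the paper's own proof: the generic $\Oh(n^4)$ branch is \prref{thm:lea}, the non-generic inputs are handled by the total but possibly non-elementary procedure of Corollary~\ref{thm:flo}, and strong genericity is reduced to the exponential bound on the probability that a random cyclically reduced word is conjugate into $H=\BS12$, which you then obtain exactly as the paper does, either by the Dyck-word pairing of \prref{sec:dyckwordpair} or by the non-amenability argument of \prref{sec:backbase}. Your one deviation is in fact a slight sharpening: by testing $\abs{\wh y}_\bet$ as well and using the product bound $p(\abs x)\,p(\abs y)\leq (1-\eps)^{\abs{xy}}$, your generic set also covers unbalanced splits of $\abs{xy}\in\theta(n)$ (e.g.\ $\abs x\in\Oh(1)$), a case the paper's proof passes over by assuming that the length of $x$ itself is in $\theta(n)$.
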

In the preceding section we have described the algorithm for the conjugacy problem. Hence, it remains to show that it runs strongly generically in $\Oh(n^4)$.
We give two proofs of \prref{thm:gen}. The first one is given in \prref{sec:dyckwordpair}. It
 uses a pairing by Dyck words. It is a little bit tedious, but self-contained and 
elementary. The second proof is given in \prref{sec:backbase}. 
It is based on a more general characterization which applies to all finitely generated  HNN extensions, see \prref{thm:backba}. To the best of our knowledge this characterization has not been stated elsewhere. The proof is not very hard, but in order to derive \prref{thm:gen} we need additional results from the literature. 
   
\subsection{Pairing with Dyck words: First proof of \prref{thm:gen}}\label{sec:dyckwordpair}

%%%%%%proofs of lemmas for appendix

\newcommand{\inproofgen}{
 In order to switch {}from $\PrO{}{\cdots}$ to $\Prk{m}{\cdots}$ we consider the 
 \emph{block structure} $B(x)$ of a word $x \in \os{b, \ov b}\cdot \Sig^*$. 
 We define $B(x)$ as the tuple $(e_1,e'_1 \lds e_k,e'_k)$ 
 for  $x = \bet_1^{e_1} \alp_1^{e'_1} \cdots \bet_k^{e_k} \alp_k^{e'_k}$
 where $e_i, e'_i >0$, with the exception that possibly $e'_k =0$, 
 $\bet_i \in \os{b, \ov b}$, and  $\alp_i \in \os{a, \ov a}$.
 
 \noindent Let ${\wt E}_{k,m} = \set{(e_1,e'_1 \lds e_k,e'_k)}{\sum_{i=1}^k e_i = 2m 
\wedge \sum_{i=1}^k e'_i = n-2m}.$ For each 
$\wt e \in {\wt E}_{k,m}$ we obtain $\PrO{}{B(x) = \wt e\,} \in \theta(2^{2k}3^{-n})$ and $\Prk{m}{B(x) = \wt e\,} \in \theta(2^{2k}3^{-n})$. In particular, we have $\sum_{m =0}^{ \floor{n/4}} 
\sum_k \sum_{\wt e\in {\wt E}_{k,m}}\PrO{}{ B(x) = \wt e\,} \leq n 2^{n}3^{-n} \approx 0$ because 
$k \leq 2m$ for $\wt e \in {\wt E}_{k,m}$. 
Moreover, $\condProb0{x\in H}{B(x) = \wt e\,}= \condProb{m}{x\in H}{B(x) = \wt e\,}$. Indeed,
both values are equal to $2^{-2k}$ for  $e'_k >0$ and equal to $2^{1-2k}$ for $e'_k = 0$. This yields:
{\allowdisplaybreaks
\begin{alignat*}{3}
& \PrO{}{x \in H}\;  &\approx & \sum_{m = \ceil{n/4}}^{n} 
\sum_k \sum_{\wt e\in {\wt E}_{k,m}}\PrO{}{x \in H \wedge B(x) = \wt e\,}\\ 
&&\approx & \sum_{m = \ceil{n/4}}^{n} \sum_k \sum_{\wt e\in {\wt E}_{k,m}}\Prk{m}{x \in H \wedge B(x) = \wt e\,}\\ 
&& = & \sum_{m = \ceil{n/4}}^{n}\Prk{m}{x \in H \wedge \abs x = n } \leq \sum_{m = \ceil{n/4}}^{n}\Prk{m}{x \in H}\\
&& \approx\; &\Prk{\ceil{n/4}}{x \in H} \approx\; 0  \text{ by \prref{lem:maingen}}.
%\\
%&& \approx\; &   \max\set{\Prk{m}{x \in H}}{m \geq n/4}.
\end{alignat*}}
}

\newcommand{\proofmatch}{

\begin{proof}
Let $x$ be given  as its $\bet$-factorization $x = \bet_1 \alp_1\lds \bet_{2n} \alp_{2n}$. 
In order to compute $\scalb x d$, we scan $d= d_1 \cdots d_{2n}$ from left to right with $d_i \in \os{\lk \rk}$. 
We stop at each $j$ where $d_j = \rk$. Let $i$ be the corresponding 
index such that 
$d_i d_j$ is a matching pair in the Dyck word $d$. We have $i < j$. 
For fixed $j$, the \proba that $\bet_j = \ov {\bet_i}$ depends on $\bet_{j-1}$, only. We have  $\condProb{n}{\bet_{j} = \ov{\bet_i}}{{\bet_{j-1}} = {\bet_i}} = 1/3$ and $\condProb{n}{\bet_{j} = \ov{\bet_i}}{\ov{\bet_{j-1}} = {\bet_i}} = 2/3$.
Thus, $\Prk{n}{\bet_{j} = \ov {\bet_i}} \leq 2/3$. Moreover, for $j= i+1$ we obtain $\Prk{n}{\bet_{j} = \ov{\bet_i}} = 1/3$. Now, $\Prk{n}{\scalb x d = 1}$ implies in addition that 
for $j = i+1$ we must have $\bet_i = b$. In that case 
we calculate
$$\Prk{n}{\bet_{i} =b \wedge  \bet_{i+1}= \ov b}= \condProb{n}{\bet_{i+1} =\ov b }{\bet_{i}= b}\Prk{n}{\bet_{i} =b} \leq (1/3)\cdot(2/3). %\text{\qed}
$$
The result follows. 
\qed
\end{proof}
}

\newcommand{\proofprobalpha}{
\begin{proof}
For real valued random variables $X$ we let $\Abs X = \sqrt{\sum_{k \in \Z} \Prob{X=k}^2}$.
Let us consider first an integer valued random variable $X$ which is given by some word
of the form $u\bet a^X \bet'v$. The distribution $\Prob{X=k}$ depends on $\bet, \bet'$, only. 
If $\bet= \bet'$ then $\Prob{X=k} = \frac{3^{-\abs k}}{2}$ for $k\in \Z$. 
If $\bet\neq \bet'$ then $\Prob{X=0} =0$ and  $\Prob{X=k} =3^{-\abs k}$ for $k\neq0$.
Thus, if $\bet= \bet'$ then ${\Abs X}^2 = 5/16$; and if $\ov \bet= \bet'$ then ${\Abs X}^2 = 1/4$. Hence:
%Since $1/4 < 5/16$ % we obtain in all cases: 
\begin{equation}\label{eq:norm}
{\Abs X}^2 \leq 5/16.
\end{equation}
Next, consider a word of the form $u\bet a^X \bet'w \bet'' a^Y \ov \bet v$ with $\bet, \bet', \bet'' \in \os {b , \ov b}$ under the assumption that 
$\bet'w \bet'' = (r,m)$ in \BG where $(r,m) \in \sdZ = H$. The random variables $X$ and $Y$ are independent and define another random variable $Z$ (with values in $\Z[1/2]$) by the equation 
$(X,0)\cdot (r,m) \cdot (Y,0) = (Z,m)$ in $\BS12$, \ie $Z= X +r +2^m Y$. 
Hence, for $k \in \Z$ we obtain 
\begin{equation}\label{eq:Z}
\Prob{Z=k} = \sum_{i\in \Z}\Prob{X=i}\Prob{Y=2^{-m}(k-r-i)}.
\end{equation}
Note that $\Prob{Y=2^{-m}(k-r-i)}= 0$ unless $2^{-m}(k-r-i)\in \Z$. The numbers $m,k,r\in \Z$ are fixed 
and $2^{-m}(k-r-i) = 2^{-m}(k-r-j)$ implies $i=j$. Thus, we can define a new
random variable $Y'$ with the distribution $\Prob{Y'=i} = \Prob{Y=2^{-m}(k-r-i)}$. 
By \prref{eq:Z} and Cauchy-Schwarz inequality
$$\Prob{Z=k} = \sum_{i\in \Z}\Prob{X=i}\Prob{Y'=i} \leq \Abs{X}\Abs{Y'}.$$ 
Since $\Abs{Y'}\leq \Abs{Y}$,
we obtain $\Prob{Z=k} \leq \Abs{X}\Abs{Y}$. Finally,  by \prref{eq:norm}
\begin{align}\label{eq:Zk}
\Prob{Z=k}  \leq 5/16.
\end{align}

Now, let  $d = d_1 \cdots d_{2n}$ with $d_i \in B$ be a Dyck word and consider 
indices $i< j-1$ such that $(i,j)$ is a matching pair. 
(This means $d_id_j = \lk\rk$ and $d_{i+1} \cdots d_{j-1}$ is a non-empty Dyck word.) 
Let $n' = \frac{j-i+1}{2}$ and $d' = d_{i+1} \cdots d_{j-1}$. Next, we claim that 
\begin{align}\label{eq:aa}
\condProb{n'}{\scalp {x} {d_id'd_j}=1 } {\scalp {y} {d'}= 1 \wedge x= \ov b yb} \leq 5/16.
\end{align}
Note that \prref{eq:aa} refers to the measure $\mu_{n'}$ and thus, $x$ runs over those reduced words 
in $\Sig^*$ with $\abs{x}_\bet = 2n'$.
In order to see this inequality, consider a word $\ov b y b$ such that $\scalp {y} {d'}= 1$.
% The word $y$ must contain two positions where letters from $\os{b, \ov b}$ appear, otherwise
% we cannot have $\scalp {\ov b y  b} {d_id'd_j}=1$, \ie $y \in T$. 
The word $y$ must contain two positions where letters from $\os{b, \ov b}$ appear because $j> i+1$.
Thus, we can write 
$y= \ov b a^X \bet w\bet' a^Y b$ such that $\bet w \bet' = (r,m)$ in $\BG$; and we can read
$X$ and $Y$ as integer valued random variables as before. For the derived random variable
$Z$ defined by $Z = X + r  + 2^m Y$ we obtain $\Prob{Z=0}  \leq 5/16$ by \prref{eq:Zk}. 
But $ \Prob{Z=0}$ is equal to $\condProb{n'}{\scalp {\bet y \tilde \bet} {d_id'd_j}=1 } {\scalp {y} {d'}= 1 \wedge \bet \tilde\bet = \ov b b}$. Hence, the claim. 

The other situation considers words of the form  $x =  b  y \ov b$. 
Again, we want to show
\begin{align}\label{eq:bovb}
\condProb{n'}{\scalp {x} {d_id'd_j}=1 } {\scalp {y} {d'}= 1 \wedge x= b y\ov b} \leq 5/16.
\end{align}
This is a more complicated situation and we need a case distinction about the structure of $d'= d_{i+1} \cdots d_{j-1}$.
We let $k$ denote the index which matches $i+1$ and $\ell$ matches the index $j-1$. 
For $\scalb {b y \ov b} {d_id'd_j}=1$, we can write $by\ov b  = ba^e \bet u \ov \bet y''\ov b$.  (Throughout we let  $\bet \in \os{b,\ov b}$ and $u,v,w,y\in \Sig^*$). 
But actually more is true. Assume $\bet = \ov b$ then index $i$ must match index $i+1$, but here we have 
$i+1 < j$, a contradiction. Hence, we conclude $\bet = b$. By symmetry, it follows that we can write $by\ov b = b a^e b  w \ov b a^f\ov b$.

\noindent{\bf Case $k > i +2$.} In this case we consider 
words $by\ov b $ which can be written as $by\ov b = b a^e b a^X\bet u \bet' a^Y\ov b v\ov b$ such that
 $\scalp{b a^X\bet u \bet' a^Y\ov b}{d_{i+1} \cdots d_k} = 1$. This implies
$\bet u \bet' = (r,0) \in \sdZ = H$ and $v = (s,q) \in  H$. Here, $X$ and $Y$ are random variables as above. In this setting,
 $\scalp{b y\ov b}{d_i d_j} = 1$ forces $Z= 0$ where $Z = X+r+ Y -q$. Inequality \prref{eq:Zk}  yields $\Prob{Z=0} \leq 5/16$. This shows \prref{eq:bovb}
 in the case $k > i +2$.

 \noindent{\bf Case $\ell < j-2$.} Symmetric to the precedent case. 
 
 \noindent{\bf Case $k =i +2$ and $\ell = j-2$.} We claim that this implies $k < \ell$.  Indeed, assume $\ell \leq k$ then we must have $i+1 = \ell$ and therefore 
 $i+1 = j-2$. Thus, $d' = d_{i+1} d_{i+2}$. But then 
 $\scalp{b y\ov b}{d_i\cdots d_{i+3}} = 1$ implies  $b y\ov b = b a^e b a^m \ov b a ^f \ov b$ with 
 $m \neq 0$, \ie $y=a^e t^ma^f \in H$ with $m\neq 0$. A contradiction because for $m \neq 0$ we have 
 $b y \ov b \notin H$ and $d$ is not successful. Thus, $i < k < \ell < j$. Now,  $\scalp{b y\ov b}{d_id'd_j} = 1$ implies  
$b y\ov b = b a^e b a^X \ov b \, u \, b a^Y \ov ba^f \ov b$.
Again, $X$ and $Y$ are random variables as above. Let $u = (r,m) \in \sdZ = H$.
 We have $b a^X \ov b = t^X $ and $b a^Y \ov b = t^Y$ in $\BG$. Thus, 
 $\scalp{b y\ov b}{d_id'd_j} = 1$ implies  $Z +m = 0$ where $Z = X+Y$. With the same arguments as in \prref{eq:Zk} we derive  $\Prob{Z=-m} \leq 5/16$. This shows \prref{eq:bovb}
 in the final case $k =i +2$ and $\ell = j-2$, too.

 Now, \prref{lem:probalpha} follows from \prref{eq:aa} and \prref{eq:bovb} since $n-k$ matching pairs $(i,j)$ exist in
 $d$ with $i+1 < j$. 
\qed\end{proof}
}

%%%%%%end proofs of lemmas for appendix
\begin{proof}
By \prref{thm:lea}, there is an algorithm deciding $x \simG y$ which runs in time $\Oh(n^4)$ for inputs which cannot be conjugated to  elements in $H$. Hence, we only have to bound the number of cyclically reduced  words of length $m \in \theta(n)$ which can be conjugated to some element in $H$. For simplicity of notation we assume $m=n$. 
A reduced word in $\Sig^n$ can be identified with a
random walk without backtracking in the Cayley graph of $\BG$ with generators $a$ and $b$. 
%(For a general theory on random walks in graphs and infinite groups we refer to \cite{woess94,woess2000}.) 
We encode reduced words over  $\Sig$ of length $n$ in a natural way as words in $\OO= \Sig\cdot\{1, 2, 3\}^{n-1}$. On $\OO$ we choose a uniform \proba (\eg, if the $i$-th letter is $b$ then the $i+1$-st letter is $a$, $ \ov a$, or $ b$ with equal probability $1/3$). 
Because we are interested in conjugacy, we compute the 
\proba under the condition that $x\in \OO$ is cyclically reduced. 
(Actually this does not change the results but makes the analysis smoother.) 
The \proba that $x\in \OO$ is cyclically reduced is  at least $2/3$ for all $n$. 
Let $C \sse \OO$  be the subset of cyclically reduced words. 
We show $\condProb{}{\exists y: x \simG y \wedge y\in H}{x \in C}\approx 0$. The question whether there exists some $y$ with $x \simG y$ is answered 
by calculating \Breduction{}s for a transposition of $x$. The set 
$C$ is closed under transpositions and it is no restriction to  assume that $\abs{x}_\bet \geq 1$. Therefore, we can choose the transposition that
$x' =vu$ where $x =uv$ such that the first letter of $x'$ is $\bet\in \os{b, \ov b}$. There are at most $n$ such transpositions. Hence,
\begin{alignat*}{3}
&\condProb{}{\exists y: x \simG y \wedge y\in H}{x \in C} &\; \approx\; &
\condProb{}{x\in H}{x \in C}\\
&=\;  \Prob{x\in H \wedge x \in C}\cdot \Prob{x \in C}^{-1}
&\;  \leq\; &  \Prob{x\in H}\cdot \Prob{x \in C}^{-1} \leq \frac{3}{2}\Prob{x\in H}.
\end{alignat*}
It is therefore enough to prove $\Prob{x\in H}\approx 0$. 
We switch the probability space and we embed $\OO$ into the space $\Sig^*$
with  a measure $\mu_{0,n}$ on $\Sig^*$ which concentrates on 
$\OO$, \ie $\mu_{0,n}(\OO) =1$. Within $\OO$ we still have a uniform distribution for $\mu_{0,n}$. In order to emphasize this change of view, we write $\Prob{\cdots } = \PrO{}{\cdots}$.
 We are now interested in  words $x \in \os{b, \ov b}\cdot\Sig^*$ which contain exactly 
 $2m$ letters $\bet\in \os{b, \ov b}$ for $m\geq 1$. (The number $\abs{x}_\bet$ must be even if $x \in H$.)
 Each such word can be written as a $\bet$-factorization of the 
 form $x = \bet_1 \alp_1 \ldots \bet_{2m} \alp_{2m}$ where 
 $\alp_i = a^{e_i}$ with $e_i \in \Z$. This defines a new measure $\mu_m$
 on $\Sig^*$ which is defined as follows. We start a random walk without backtracking with either $b$ or $\ov b$  with equal 
 \proba. 
 For the next letter there are always $3$ possibilities, each is chosen with \proba $1/3$. We continue as long as the random walk contains at most $2m$ letters from $\os{b, \ov b}$. This gives a corresponding 
 \proba on $\Sig^*$ which is concentrated on those words with $\abs{x}_\bet = 2m$. We denote the
 corresponding \proba 
 by $\Prk{m}{\cdots}$. 
\iflatin 
In the appendix it is shown that $\PrO{}{x \in H}\approx \sum_{m = \ceil{n/4}}^{n}\Prk{m}{x \in H}$. Hence, it remains to show $\Prk{m}{x \in H} \approx\; 0 $. \else \inproofgen Hence, the proof of \prref{thm:gen} is reduced to 
show \prref{lem:maingen}.\fi

{}From now on we work with the measure $\mu_n$ and the corresponding \proba $\Prk{n}{\cdots}$ for $n \geq 1$. 
Thus, we may assume that our \proba space contains only those words $x$ which have  $\bet$-factorizations of the 
 form $x = \bet_1 \alp_1 \ldots \bet_{2n} \alp_{2n}$
 with $\alp_i \in a^\Z$.
 \iflatin \else
The following result is the main lemma for the analysis of the generic case. \fi
\begin{lemma}\label{lem:maingen}
%As function  we have  $(n \mapsto \Prk{n}{x \in H})\approx 0$. 
We have $\Prk{n}{x \in H} \leq (8/9)^n$. 
\end{lemma}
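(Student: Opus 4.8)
The plan is to control the event $\{x \in H\}$ through a union over Dyck words and then apply a union bound. The words $x$ in our probability space have $\bet$-factorization $x = \bet_1\alp_1\cdots\bet_{2n}\alp_{2n}$ with $\abs{x}_\bet = 2n > 0$, so membership $x \in H$ forces all $2n$ letters $\bet_i \in \os{b,\ov b}$ to disappear under \Breduction{}s. The first step is to observe that any run of \Breduction{}s witnessing $x \in H$ pairs up these $2n$ positions into a non-crossing matching: whenever a factor $\bet\gam\ov\bet$ is reduced, its two $\bet$-letters are matched, and successive reductions never cross. Encoding such a non-crossing matching as a Dyck word $d$ of length $2n$ and writing $\scalb x d = 1$ for the event that the reductions prescribed by $d$ actually cancel all $\bet$-letters and leave an element of $H$, this yields the containment $\{x \in H\} \subseteq \bigcup_d \{\scalb x d = 1\}$, the union ranging over all Dyck words $d$ of length $2n$.

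Granting this, the union bound gives $\Prk{n}{x \in H} \le \sum_d \Prk{n}{\scalb x d = 1}$. The number of Dyck words of length $2n$ is the Catalan number $\frac{1}{n+1}\binom{2n}{n} \le 4^n$, so the whole statement reduces to the per-word estimate $\Prk{n}{\scalb x d = 1} \le (2/9)^n$ for every fixed $d$; indeed then $\Prk{n}{x \in H} \le 4^n (2/9)^n = (8/9)^n$, which is exactly the claim (here $2/9 = 1/4 - 1/36$, matching the informal bound $(1/4-\eps)^n$ per Dyck word).

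To prove the per-word bound I would exploit the structure of the measure $\mu_n$: the walk is generated without backtracking with each of the three admissible continuations chosen with probability $1/3$, so the sequence $(\bet_i)$ is a Markov chain and the $a$-exponents between consecutive $\bet$-letters are mutually independent. Processing the matching of $d$ from the innermost pairs outward, each of the $n$ pairs should contribute one constraining factor. For an innermost (adjacent) pair the factor $\bet_i\alp_i\bet_{i+1}$ can only feed a reduction into $H$ if $\bet_i = b$ and $\bet_{i+1} = \ov b$, and the Markov structure gives $\Prk{n}{\bet_i = b \wedge \bet_{i+1} = \ov b} = \condProb{n}{\bet_{i+1}=\ov b}{\bet_i=b}\,\Prk{n}{\bet_i=b} \le (1/3)(2/3) = 2/9$. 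For a pair $(i,j)$ with $i+1 < j$ the requirement splits into a $\bet$-constraint $\bet_j = \ov{\bet_i}$ (probability $\le 2/3$ by the same Markov estimate) together with a linear constraint $Z = c$ on an integer-valued random variable $Z = X + r + 2^m Y$ assembled from the independent exponent increments $X, Y$; a Cauchy--Schwarz estimate on the $\ell^2$-norms of the increment distributions bounds $\Prk{n}{Z = c}$ by $5/16$, so the two constraints combine to a factor $\le (2/3)(5/16) = 5/24 < 2/9$. Multiplying the $n$ conditionally independent factors then yields $(2/9)^n$.

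The main obstacle is precisely this last step: the $n$ pair-events all live on the single random walk and are therefore not literally independent, so the real work lies in organizing the conditioning --- peeling innermost pairs one at a time and using the block and independence structure of $\mu_n$ together with the Cauchy--Schwarz bound on exponent sums --- so that each matched pair genuinely contributes a factor $\le 2/9$ and the factors multiply along the nesting of $d$. By comparison, the combinatorial containment $\{x \in H\} \subseteq \bigcup_d \{\scalb x d = 1\}$, although it requires the standard Britton/HNN bookkeeping behind \prref{prop:WPgersten}, is routine, as is the Catalan count giving the factor $4^n$.
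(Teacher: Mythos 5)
Your proposal follows the paper's own proof essentially step for step: the same union bound over Dyck words with the Catalan count $\abs{D_n}\leq 4^n$, the same per-word target $(2/9)^n$, the same Markov-chain bound $(1/3)\cdot(2/3)=2/9$ for adjacent (innermost) pairs, and the same Cauchy--Schwarz bound $5/16$ on the exponent constraint combined with the $2/3$ letter constraint to give $5/24<2/9$ for nested pairs. The conditioning issue you flag as the main obstacle is resolved in the paper exactly along the lines you sketch, by splitting the per-word estimate into two lemmas --- \prref{lem:match} bounds the probability that the $\bet$-letter structure matches $d$ by a left-to-right scan, and \prref{lem:probalpha} conditions on that match and bounds the exponent constraints over the $n-k$ non-adjacent matching pairs --- whose product gives $(5/16)^{n-k}(2/3)^{n-k}(2/9)^k\leq (2/9)^n$.
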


The proof of \prref{lem:maingen}  is based on a ``pairing'' with Dyck words:
Define a new alphabet $B = \os {\lk ,\rk}$ where $\lk$ is an opening 
left-bracket and $\rk$ is the corresponding  closing right-bracket.
The set of Dyck words $D_n$ is the set of words in $B^{2n}$ with correct bracketing. 
The number of Dyck words is well-understood, we have $\abs{D_n} = \frac{1}{n+1}\binom {2n}{n} \leq 4^n$. Thus,  $\abs{D_n} = C_n$, where
$C_n$ is the $n$-th Catalan number.
The connection between  Dyck words and \Breduction{}s is as follows.
\Breduction{}s are defined for words 
$\os{a,\ov a, t, \ov t, b, \ov b}^*$.  
Consider a $\bet$-factorization of the 
 form $x = \bet_1 \alp_1 \ldots \bet_{2n} \alp_{2n}$ with $\alp_i \in a ^\Z$.
 If  $x \in H$, then there exists a sequence of Britton reductions which transforms $x$ into $\wh x \in \os{a,\ov a, t,\ov t}^*$. We call such a sequence a \emph{successful \Breduction}.  Every successful \Breduction defines in a natural way a Dyck word by assigning an opening bracket to position $i$ and a closing bracket to position $j$ if $\bet_{i} u \bet_{j}$ is replaced by a Britton reduction. 
 Moreover, Britton reductions are confluent on $H$. 
 In particular, this means that for $x \in H$ we can start a successful  \Breduction
by replacing all factors $\bet_{i} a^e \bet_{i+1}$ with $\bet_i = b = \ov{ \bet_{i+1}}$ and $e \in \Z$ by $t^e$
where $1 \leq i < 2n$.
Thus, if such a successful \Breduction is described by $d$, then we may assume that 
$d_id_{i+1} = \lk \rk$ whenever $\bet_{i} a^e \bet_{i+1} = b a^e \ov b$.
Vice versa, if $d_{i}d_{i+1} = \lk \rk$, then we must have $\bet_i = b = \ov{ \bet_{i+1}}$, otherwise 
$d$ is no description of any \Breduction for $x$ at all. 
Note that for each $i$ with $d_i = \lk$ there is exactly one $j$ 
which matches $d_i$. The characterization of $j$ is that 
$d_{i+1} \cdots d_{j-1}$ is a Dyck word and $d_j = \rk$.
%This $j$ satisfies  $d_j = \rk$ and $i< j$. 
If $d$ describes a \Breduction for $x$ and $(i,j)$ is a matching pair for $d$
then $\bet_i  \ov{ \bet_{j}}= \bet\, \ov \bet $ for some $\bet \in \os{b, \ov b}$. 
We therefore say that $x$ and $d$ \emph{match}  if the following two conditions are 
satisfied: 
\begin{enumerate}
\item For all $1 \leq i < 2n$ we have $d_{i}d_{i+1} = \lk \rk \iff \bet_{i}\bet_{i+1} = b\, \ov b$. 
\item For all $1 \leq i < j \leq  2n$ where $d_{i}d_{j} = \lk \rk$ is a matching pair %of $d$ 
we have 
$\bet_{i}\bet_{j} = \bet\, \ov \bet $. % for some $\bet \in \os{b, \ov b}$. 
\end{enumerate}
We define $\scalb xd = 1$ if $x$ and $d$ match and ${\scalb xd} = 0$ otherwise. 
We refine this pairing by defining $\scalp xd = 1$ if $\scalb xd = 1$ and $d$ describes a successful \Breduction proving 
$x \in H$. Otherwise we let $\scalp xd = 0$. 
Clearly, % $\scalp xd = 1$ implies $\scalb xd = 1$ and
\begin{equation}\label{eq:wunderer}
\Prk{n}{x\in H} \leq \sum_{d \in D_{n}} \Prk{n}{\scalp xd = 1}.
\end{equation}
Since $\abs {D_n} \leq 4^n$, the  proof of \prref{lem:maingen} reduces  to show that for every $d\in D_n$ we have
\begin{equation}\label{eq:wunder}
\Prk{n}{\scalp xd = 1} \leq (2/9)^n.
\end{equation}

\begin{lemma}\label{lem:match}
Let $d \in D_{n}$ be a Dyck word and $k= \abs{\set{i}{d_{i}d_{i+1} = \lk \rk}}$. Then we have 
$\Prk{n}{\scalb x d = 1} \leq %2^n3^{-n-k} = 
(2/3)^{n-k}(2/9)^k$.
\end{lemma}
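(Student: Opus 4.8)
The plan is to compute $\Prk{n}{\scalb x d = 1}$ by exploiting the independence of the letter-choices along the random walk. Recall that $x = \bet_1 \alp_1 \cdots \bet_{2n}\alp_{2n}$ is built by a non-backtracking walk: the first $\bet$-letter is $b$ or $\ov b$ with probability $1/2$ each, and thereafter every new letter is one of the three non-cancelling continuations, each with probability $1/3$. The event $\scalb xd = 1$ is a conjunction of constraints on consecutive $\bet$-letters (one for each matching pair of $d$), so I would decompose the matching conditions into two types and bound each conditional probability separately, then multiply.

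\textbf{Main steps.} First I would observe that condition (1) in the definition of matching concerns exactly the $k$ indices $i$ with $d_id_{i+1}=\lk\rk$: for each such adjacent pair we need $\bet_i\bet_{i+1} = b\,\ov b$, whereas for the remaining $n-k$ matching pairs $(i,j)$ with $i+1<j$ condition (2) only requires $\bet_i\bet_j = \bet\,\ov\bet$ for some $\bet$. The key quantitative input is how a single constraint $\bet_j = \ov{\bet_i}$ restricts the walk. For a non-adjacent pair, the crucial point (already used in the analysis for $\scalp{}{}$) is that $\Prk{n}{\bet_j = \ov{\bet_i}} \le 2/3$, since the value of $\bet_j$ is determined by the three equiprobable continuations of $\bet_{j-1}$ and at most two of them give $\ov{\bet_i}$. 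For an adjacent pair where we additionally force $\bet_i = b$, the probability drops further: $\Prk{n}{\bet_i = b \wedge \bet_{i+1}=\ov b} = \condProb{n}{\bet_{i+1}=\ov b}{\bet_i = b}\,\Prk{n}{\bet_i = b} \le (1/3)\cdot(2/3) = 2/9$. So each of the $k$ adjacent constraints contributes a factor $2/9$ and each of the $n-k$ non-adjacent constraints contributes a factor $2/3$.

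\textbf{Combining the factors.} The heart of the argument is to justify multiplying these per-pair bounds, i.e.\ to show the $2n$ matching constraints can be imposed sequentially so that each conditional probability is independent of the outcomes already conditioned on. I would do this by scanning $d$ from left to right and stopping at each closing bracket $d_j=\rk$, exactly as in the companion proof of \prettyref{eq:wunder}: at position $j$ the relevant constraint on $\bet_j$ depends only on $\bet_{j-1}$ (through the walk's Markov structure) and on $\bet_i$ for the matching partner $i<j$, and once we process brackets in this order the events become a product over the $n$ matching pairs. This yields
\[
\Prk{n}{\scalb x d = 1} \le (2/3)^{n-k}(2/9)^k,
\]
which is the claim.

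\textbf{Expected obstacle.} The routine estimates $2/3$ and $2/9$ are easy; the delicate point is the factorization itself — verifying that conditioning on the previously-processed matching constraints does not spoil the bound for the current pair. The subtlety is that the distribution of $\bet_j$ genuinely depends on $\bet_{j-1}$, so I must argue that when I reach a closing bracket the partner letter $\bet_i$ has already been ``used up'' and the constraint at $j$ conditions cleanly on $\bet_{j-1}$ (for the adjacent case $j=i+1$ these coincide, which is precisely why that factor is smaller). Making the left-to-right scan into a rigorous telescoping product of conditional probabilities, each bounded by the stated constant regardless of the history, is where the real care is needed.
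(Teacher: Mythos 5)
Your proposal is correct and follows essentially the same route as the paper's proof: the same left-to-right scan stopping at each closing bracket, the same per-pair conditional bounds of $2/3$ for non-adjacent matching pairs and $(1/3)\cdot(2/3)=2/9$ for adjacent ones (where $\bet_i=b$ is forced), multiplied together. The conditioning subtlety you flag is resolved exactly as you describe: the distribution of $\bet_j$ depends only on $\bet_{j-1}$, and every previously imposed constraint involves only letters of smaller index, so each conditional factor is bounded uniformly over the history.
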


\iflatin \else \proofmatch \fi

\begin{lemma}\label{lem:probalpha}
Let $d \in D_{n}$ be a Dyck word and $k= \abs{\set{i}{d_{i}d_{i+1} = \lk \rk}}$. Then we have
 $$ \condProb{k}{\scalp{x}{d} =1}{\scalb{x}{d} =1} \leq (5/16)^{n-k}.$$
\end{lemma}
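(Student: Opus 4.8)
The plan is to exploit the fact that a Dyck word of length $2n$ has exactly $n$ matching pairs, of which $k$ are the immediate pairs $d_id_{i+1}=\lk\rk$ counted by the hypothesis, while the remaining $n-k$ pairs $(i,j)$ satisfy $i+1<j$ and enclose a non-empty Dyck word $d_{i+1}\cdots d_{j-1}$. The immediate pairs correspond to \Breduction{}s $ba^e\ov b=t^e$ (resp.\ $\ov b a^e b=a^e$), which succeed for \emph{every} exponent $e\in\Z$ once the matching $\scalb xd=1$ has been fixed; hence, relative to the conditioning event, they impose no further constraint. The entire bound must therefore be extracted from the $n-k$ nested pairs, and I would show that each of them contributes an independent factor of at most $5/16$.

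First I would set up the analytic tool. For an integer-valued random variable $X$ write $\Abs X=\sqrt{\sum_{k\in\Z}\Prob{X=k}^2}$. The random walk produces exponent blocks $a^X$ sitting between two letters from $\os{b,\ov b}$; under the relevant measure the law of such an $X$ is essentially geometric, $\Prob{X=k}\in\os{3^{-\abs k}/2,\,3^{-\abs k}}$ according to whether the two surrounding $\bet$-letters agree, and a direct summation gives $\Abs X^2\leq 5/16$. When a nested pair $(i,j)$ closes, the element $\gam\in H$ assembled inside it is, by the group law of \sdZ, of the form $(Z,m)$ with $Z=X+r+2^mY$ for two independent exponent variables $X,Y$ and constants $r,m$ coming from the already reduced interior. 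Success of the reduction at level $(i,j)$ forces $Z$ to hit a single prescribed value (for instance $Z=0$ when $\bet_i=b$ demands $\gam\in T$). Cauchy--Schwarz then yields $\Prob{Z=k}=\sum_i\Prob{X=i}\Prob{Y'=i}\leq\Abs X\Abs{Y'}\leq 5/16$, where $Y'$ is the reindexing of $Y$ dictated by the equation $Z=k$ (so $\Abs{Y'}\leq\Abs Y$). This is exactly the per-pair factor I want.

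Next I would turn these per-pair estimates into the product $(5/16)^{n-k}$. Processing the matching pairs from innermost to outermost and conditioning at each step on the interior reduction having succeeded, each nested closure multiplies the survival probability by an independent factor bounded by $5/16$, since the fresh exponent blocks feeding distinct pairs are disjoint and hence independent. Concretely, I would prove the one-step conditional inequalities
\begin{align*}
\condProb{n'}{\scalp{x}{d_id'd_j}=1}{\scalp{y}{d'}=1\wedge x=\ov b\,y\,b}\leq 5/16
\end{align*}
together with its $x=b\,y\,\ov b$ counterpart, and then iterate over the $n-k$ pairs with $i+1<j$.

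The main obstacle I expect is the case analysis for the $x=b\,y\,\ov b$ closures, where the inner Dyck word $d'$ can take several shapes. One must treat separately whether the block adjacent to the outer $b$ (resp.\ to the outer $\ov b$) is itself immediately closed, i.e.\ according to whether the partner $k$ of $i+1$ and the partner $\ell$ of $j-1$ satisfy $k>i+2$, $\ell<j-2$, or $k=i+2$ and $\ell=j-2$. In the last, most degenerate case I would have to rule out a short configuration $b\,a^e b\,a^m\ov b\,a^f\ov b$ with $m\neq 0$, which does not lie in $H$ and therefore cannot correspond to a successful reduction at all; verifying that this forces $i<k<\ell<j$ and that the surviving combined variable again has point mass at most $5/16$ is the delicate point. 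Once every case delivers the uniform bound $5/16$, the lemma follows by multiplying over the $n-k$ nested matching pairs.
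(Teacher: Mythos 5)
Your proposal follows the paper's own proof essentially step for step: the same norm $\Abs{X}$ with the bound $\Abs{X}^2 \leq 5/16$, the same Cauchy--Schwarz estimate for $Z = X + r + 2^m Y$, the same one-step conditional inequalities for the $\ov b\, y\, b$ and $b\, y\, \ov b$ closures (including the identical three-way case analysis on the partners of $i+1$ and $j-1$, and the exclusion of the degenerate configuration $b a^e b a^m \ov b a^f \ov b$ with $m \neq 0$), and the same final multiplication over the $n-k$ non-immediate matching pairs. This is correct and is the paper's argument.
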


 \iflatin \else \proofprobalpha \fi
 \prref{lem:match} and \prref{lem:probalpha} enable us to calculate $\Prk{n}{\scalp{x}{d} =1}$ as follows:
 \begin{align*}
\Prk{n}{\scalp{x}{d} =1}&= \condProb{k}{\scalp{x}{d} =1}{\scalb{x}{d} =1}\cdot \Prk{n}{\scalb{x}{d} =1}\\
%& \leq \condProb{k}{\scalp{x}{d} =1}{\scalb{x}{d} =1} (2/3)^{n-k} (2/9)^k \\
& \leq (5/16)^{n-k} \cdot (2/3)^{n-k} (2/9)^k \leq (2/9)^n. 
\end{align*}
 This shows \prref{eq:wunder} and therefore \prref{lem:maingen} which in turn implies 
 \prref{thm:gen}.
 \qed\end{proof}

\subsection{Computer Experiments} 
 \begin{figure}[t]
 \vspace{-2mm}
\begin{center}
%\begin{footnotesize}
\begin{tikzpicture}[scale=.75]%
	\begin{axis}[ymode= log,width=\textwidth,  height=0.24\textheight ,xlabel={$n$}]%,ylabel={$(\text{\#comparisons} - n\lg n)/n$},
%	 legend pos= north east ]%[width=0.9\textwidth,height=0.9\textheight,
 		\addplot[densely dashed,every mark/.append style={solid,fill=black},mark=square*]  table[x=n,y=ratio] {experiments2.csv};
       \end{axis}
\end{tikzpicture} 
%\end{footnotesize}
\vspace{-4mm}
\caption{Portion of reduced words $x\in H$ with $\abs{x}_\bet = 2n$,
sampling $11\cdot 10^9$ words.
%which represent group elements in $H$.
}\label{fig:plot}\end{center}
\vspace{-5mm}
\end{figure}
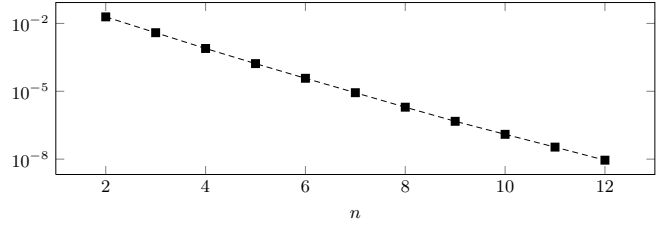

% 
% 
% 
% 
%  %\subsection*{Computer Experiments} 
%  \begin{figure}[t]
%  \vspace{-2mm}
% \begin{center}
% %\begin{footnotesize}
% \begin{tikzpicture}[]%scale=.75
% 	\begin{axis}[ymode= log, xlabel={$n$}]%,ylabel={$(\text{\#comparisons} - n\lg n)/n$},
% %	 legend pos= north east ]%[width=0.9\textwidth,height=0.9\textheight,
%  		\addplot[densely dashed,every mark/.append style={solid,fill=black},mark=square*]  table[x=n,y=A] {inBSdata.csv};
%  		 \addplot[densely dotted,every mark/.append style={solid,fill=black},mark=square*]  table[x=n,y=T] {inBSdata.csv};
%  		
%        \end{axis}
% \end{tikzpicture} 
% %\end{footnotesize}
% \vspace{-4mm}
% \caption{Portion of reduced words $x\in H$with $\abs{x}_\bet = 2n$,
% sampling $11\cdot 10^9$ words.
% %which represent group elements in $H$.
% }\label{fig:plot}\end{center}
% \vspace{-5mm}
% \end{figure}
 We have conducted computer experiments with a sample of $11\cdot 10^9$ (\ie 11 billion)
 random words $x \in \Sig^*$ with $4 \leq \abs{x}_\bet = 2n \leq 24$, see  \prref{fig:plot}. Moreover, for $n=14$ our random process did not find a single $x\in H$.
 The experiments confirm $\Prk{n}{x \in H}\approx 0$. 
The initial values seem to 
 suggest $\Prk{n}{x \in H} \in \Oh(0{.}25^n)$. This is much better than the upper bound of \prref{lem:maingen}, but our 
 proof used very rough estimations in \prref{eq:wunderer} and \prref{eq:wunder}, only. Hence, a difference is no surprise.

\section{Back-to-base probability in HNN extensions: Second proof of \prettyref{thm:gen}}\label{sec:backbase}
This section has been added to the arXiv version in  November 2014, only. 
The motivation has been to give an alternative proof of \prettyref{thm:gen} which uses some known  results {}from literature.
For convenience of the reader there is some overlap with 
material in \prref{sec:dyckwordpair}. This allows an independent reading. 
In the following we investigate the general situation of an HNN extension $G$ which is given as $G = \Gen{H,b}{bab^{-1}=\phi(a), a \in A}$ with 
a finitely generated base group $H$. 
By the \emph{Back-to-base probability} we mean the probability 
that a random walk in the associated Cayley graph of $G$ ends 
in the base group $H$. 
In order to make the statement precise we fix the following 
notation. We let $H$ be the base group which is generated by some
finite subset $\Sig \sse H$  such that $\Sig = \Sig^{-1}$. 
We use a symmetric set of generators in order to apply \prettyref{prop:quasiiso}. (In fact, \prettyref{prop:quasiiso} is false for non-symmetric generating sets, in general.)
We let $A$ and $B$ be isomorphic subgroups of $H$ and  
$\phi: A\to B$ be a fixed isomorphism between them. Then, as usual, $G = \Gen{H,b}{bab^{-1}=\phi(a), a \in A}$  denotes the 
corresponding HNN extension of $H$ with stable letter $b$. 
By $\Del$ we denote the set $\Del= \Sig \cup \smallset{b, \ov b}$
where $\ov b = b^{-1}$. Thus, the
``evaluation of words over $\Del$'' defines a monoid presentation
$\eta: \Del^* \to G$, which is induced by the inclusion $\Del \sse G$. 
Recall that for $x \in \Del^*$ and 
$a \in \Del$ we denote by $\abs{x}_a$ the number of occurrences of the letter $a$ in the word $x$, and we let $\abs{x}_\bet = \abs{x}_b +\abs{x}_{\ov b} $. For $x \in \Del^*$ let $\wh x  \in \Del^*$ denote a \Breduced word such that $\eta(x) = \eta(\wh x)$ in $G$. Using this notation  
let us define $\Abs{x}_\bet$ by $\Abs{x}_\bet = \abs{\wh x}_\beta$.

For each $n \in \N$ we view $\Del^n$ as a \proba space with a uniform  distribution. Thus, we consider 
random walks in the Cayley graph of $G$ w.r.t.{} the generating set $\Del$ where each outgoing edge is chosen with equal probability. In contrast to 
\prref{sec:dyckwordpair} random walks may backtrack, i.e., they are not necessarily reduced words. 
We aim to show the following result. 
 
\begin{theorem}\label{thm:backba}
 Let $G = \Gen{H,b}{bab^{-1}=\phi(a), a \in A}$ be an HNN extension of $H$ and $\eta: \Del^* \to G$ as above.
 Then we have $A\neq H \neq B$ \IFF $\set{x\in \Del^*}{\eta (x)\not\in H}$  is strongly generic in $\Del^*$.
\end{theorem}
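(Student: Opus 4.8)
The plan is to translate the statement into a question about the amenability of the Schreier graph $\GG(G,H,\Del)$ and then read off the dichotomy from the geometry of that graph. Write $p_n = \Prob{\eta(x)\in H}$ for $x$ uniform in $\Del^n$. Since each step is a uniform choice in $\Del$ and $\Del = \Del^{-1}$, the quantity $p_n$ is exactly the $n$-step return probability, at the base vertex $H$, of the simple (symmetric, reversible) random walk on the Schreier graph $\GG(G,H,\Del)$ with vertex set $H\backslash G$. The set $\set{x\in\Del^*}{\eta(x)\notin H}$ is strongly generic precisely when $p_n \le 2^{-\eps n}$ for some $\eps>0$ and almost all $n$, i.e. when $p_n$ decays exponentially. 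First I would invoke the standard spectral dictionary for bounded-degree graphs (Kesten's amenability criterion and its graph-theoretic form, see \cite{woess2000}): the spectral radius $\rho = \limsup_n p_n^{1/n}$ satisfies $\rho<1$ if and only if $\GG(G,H,\Del)$ is non-amenable. The parity subtlety is harmless: if $\rho<1$ then $p_n\le(\rho')^n$ for any $\rho<\rho'<1$ and all large $n$, giving strong genericity; if $\rho=1$ then $p_n>2^{-\eps n}$ infinitely often for every $\eps>0$, so strong genericity fails. This reduces the theorem to the purely geometric statement that $\GG(G,H,\Del)$ is non-amenable if and only if $A\neq H\neq B$.

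Next I would analyse the geometry of $\GG(G,H,\Del)$ through \Breduced normal forms. A coset $Hg$ is determined by the \Breduced form of $g$ with its leading $H$-syllable deleted (the left factor $H$ absorbs it), hence by a sequence $b^{\eps_1}\gam_1\cdots b^{\eps_k}\gam_k$ in which, by the standard normal-form facts quoted above, each intermediate $\gam_i$ is a coset representative modulo $A$ or modulo $B$ according to the adjacent stable letter. Thus $\GG(G,H,\Del)$ carries a height map onto the Bass--Serre tree of $G$: the letters $b,\ov b$ move between levels, a move in the direction of the stable letter branches by $[H\!:\!B]$ and a move against it by $[H\!:\!A]$ (the number of relevant coset representatives), while the $\Sig$-edges move freely \emph{within} each level-sheet. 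I would use \prref{prop:quasiiso} (which is why $\Sig$ is taken symmetric) to make precise that $\GG(G,H,\Del)$ is quasi-isometric to the corresponding horocyclic-product model, and recall that amenability is a quasi-isometry invariant of bounded-degree graphs.

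The crucial warning, and the source of the real difficulty, is that $\GG(G,H,\Del)$ is \emph{not} quasi-isometric to the Bass--Serre tree itself: that tree has degree $[H\!:\!A]+[H\!:\!B]$ and can be non-amenable even when $\GG(G,H,\Del)$ is amenable. The clean witness is $\BS12$ viewed as an HNN extension of $\Z$ with $A=\gen a=H$ and $B=\gen{a^2}\neq H$; its Bass--Serre tree is $3$-regular (non-amenable) yet $\BS12$ is solvable, hence amenable, so its Schreier graph with respect to $\gen a$ is amenable. This is exactly the one-sided case. Accordingly I would finish by splitting into the two cases of the horocyclic model. If $A\neq H\neq B$ then both indices are $\ge 2$, the model branches on both sides, and it is a genuinely non-amenable two-sided treebolic space; here I would either exhibit a uniform isoperimetric (Cheeger) inequality directly or cite the amenability classification of such horocyclic products from the literature. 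If $A=H$ (the case $B=H$ being symmetric under $b\leftrightarrow\ov b$) then $[H\!:\!A]=1$, the $\ov b$-direction does not branch, and the model degenerates to a one-sided (line-times-branching) horocyclic structure which is amenable; one builds F\o lner sets as sheets over the non-branching direction, mirroring the $\BS12$ computation. Combining the two cases with the reduction of the first paragraph yields the equivalence.

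The main obstacle I anticipate is precisely this last geometric step: correctly identifying the quasi-isometry model of $\GG(G,H,\Del)$ (it is a horocyclic product, \emph{not} the Bass--Serre tree) and then pinning down that the one-sided model is amenable while the two-sided one is not. The subtlety is that branching alone does not force non-amenability — it is the balance between the $A$-side and $B$-side branching, together with the free $\Sig$-movement inside sheets, that decides amenability, which is why the naive tree comparison and a naive one-dimensional drift analysis both fail and one must appeal to \prref{prop:quasiiso} and to the literature on random walks on such graphs \cite{woess2000,woess94}.
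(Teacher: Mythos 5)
Your opening reduction is sound: strong genericity of $\set{x}{\eta(x)\notin H}$ is precisely exponential decay of the return probability of the simple random walk on the Schreier graph $\GG(G,H,\Del)$, and by the graph version of Kesten's criterion this is equivalent to non-amenability of that graph; this matches the remark following \prref{thm:backba}. The gap is in the geometric step that is supposed to carry the entire proof. You assert that $\GG(G,H,\Del)$ is quasi-isometric to a ``horocyclic-product model'' and that such a model is non-amenable exactly when it branches on both sides. Neither claim holds up. \prref{prop:quasiiso} cannot deliver the first: it only says that genericity of the complement of $\eta^{-1}(H)$ is independent of the chosen finite symmetric generating set; it supplies no quasi-isometry to any model space, and for a general base group $H$ and general $A,B$ there is no horocyclic-product description of the Schreier graph. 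The second claim is false for the very class of models you invoke: the horocyclic product of two trees with \emph{equal} branching, the Diestel--Leader graph $\mathrm{DL}(q,q)$, is a Cayley graph of the lamplighter group $\Z_q\wr\Z$ and hence amenable, despite branching on both sides. Your own closing sentence (``it is the balance between the $A$-side and $B$-side branching \ldots{} that decides amenability'') exposes the contradiction: for an HNN extension with $[H\!:\!A]=[H\!:\!B]=2$, say $G=\Gen{\Z,b}{ba^2b^{-1}=a^2}$ with $H=\gen a$ and $A=B=\gen{a^2}$, a balanced model would be amenable, whereas \prref{thm:backba} asserts non-amenability. (And the theorem is right: here $a^2$ is central, $G/\gen{a^2}\cong \Z_2 * \Z$, and since $\gen{a}/\gen{a^2}$ is finite the Schreier graph is quasi-isometric to the Cayley graph of the virtually free group $\Z_2*\Z$, which is non-amenable --- nothing like $\mathrm{DL}(2,2)$.) So the identification of the model, or the amenability criterion you apply to it, must fail; in fact both do, and the hard implication ($A\neq H\neq B$ implies strong genericity) is left unproved.

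The paper's proof passes through no structure theory of the Schreier graph at all, and is insensitive to the relative sizes of $[H\!:\!A]$ and $[H\!:\!B]$ --- which is exactly why it survives the ``balanced'' cases where your heuristic predicts amenability. The key point is \prref{lem:indexprob}: since $A\neq H$, some generator in $\Sig$ lies outside $A$ (symmetrically for $B$), so after reading a stable letter the next potential pinch $\bet\gam y\ov\bet$ fails to be Britton-reducible with conditional probability at least $2/\abs\Del^2$, regardless of the prefix already read. Hence the Britton-reduced $\bet$-length $X_\ell$ of the prefix is a walk whose increments are $+1$ with conditional probability at least $1/2+1/\abs\Del^2$; stochastic domination by i.i.d.\ biased $\pm1$ variables (\prref{lem:Xm}) gives $\Prob{X_m=0}\leq (1-4/\abs\Del^4)^{m/2}$, the required exponential decay. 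For the converse directions the paper is also more concrete than your F\o lner sketch: when $A=H$ every pinch succeeds, so any word whose $\os{b,\ov b}$-projection is a Dyck word lands in $H$, and the Catalan-number count already gives a polynomial lower bound on the return probability, defeating strong genericity without constructing F\o lner sets.
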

\begin{remark}
 In terms of amenability of Schreier graphs (see e.g., \cite{CeccheriniSilbersteinGH98,KMSS1}) we can restate \prettyref{thm:backba} as follows: Let $G = \Gen{H,b}{bab^{-1}=\phi(a), a \in A}$ be an HNN extension of $H$ and $\eta: \Del^* \to G$ as above.
 The Schreier graph $\GG(G,H,\Del)$ is non-amenable if and only if $A\neq H \neq B$.
\end{remark}

Before we prove \prref{thm:backba} let us show 
how to derive  \prettyref{thm:gen} from \prref{thm:backba}. 
We use the  following two propositions (see also \cite{KMSS1}).
%First, we use the following result. 

\begin{proposition}[{\cite[Prop.{} 38, Thm.{} 51]{CeccheriniSilbersteinGH98}}]\label{prop:quasiiso}
  Let $G$ be a finitely generated group and $H \leq G$ be a subgroup. Let $\eta: \Del^* \to G$, $\eta': \Del'^* \to G$ two monoid presentations of $G$.  Then, $\Del^* \setminus \eta^{-1}(H)$ is strongly generic in $\Del^* $ \IFF  $\Del'^* \setminus \eta'^{-1}(H)$ is strongly generic in $\Del'^*$. 
%aww{hier und bei der naechsten Prop brauchen wir wohl doch symmetric presentations}
 \end{proposition}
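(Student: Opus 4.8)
The plan is to reinterpret strong genericity of the complement of $H$ as a spectral/geometric property of the Schreier graph $\GG(G,H,\Del)$ that manifestly depends only on the pair $(G,H)$ and not on the chosen generating set. Throughout I assume, as the paper does, that $\Del$ and $\Del'$ are finite and \emph{symmetric}; this is essential, and without it the statement fails, as the example $G=\Z$, $H=\os 0$ with a biased non-symmetric generating set (for which the complement of $H$ can be strongly generic although $\Z$ is amenable) already shows.

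First I would set up the probabilistic dictionary. For a symmetric generating set $\Del$, the quantity $\abs{\set{x\in\Del^n}{\eta(x)\in H}}/\abs{\Del}^n$ is exactly the $n$-step return probability $p_n$ to the base vertex $o=H$ of the simple random walk on the Schreier graph $X=\GG(G,H,\Del)$, whose vertices are the right cosets $H\backslash G$ and which is $\abs{\Del}$-regular. Symmetry of $\Del$ makes the transition operator $P$ self-adjoint on $\ell^2(H\backslash G)$, so that $p_{2n}=\langle P^{2n}\delta_o,\delta_o\rangle=\Abs{P^n\delta_o}^2\leq\rho^{2n}$ and $\rho=\lim_n p_{2n}^{1/2n}$, where $\rho=\rho(X)$ is the spectral radius (norm of $P$). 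Consequently $\Del^*\setminus\eta^{-1}(H)$ is strongly generic, \ie $p_n\leq 2^{-\eps n}$ for almost all $n$, if and only if $\rho<1$.

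Next I would invoke Kesten's amenability criterion for bounded-degree graphs: $\rho(X)<1$ if and only if $X$ is non-amenable, i.e.\ its edge-isoperimetric (Cheeger) constant is strictly positive, equivalently $X$ admits no F\o{}lner sequence. This reduces the whole statement to the claim that non-amenability of the Schreier graph is independent of $\Del$. For that, note that $X=\GG(G,H,\Del)$ and $X'=\GG(G,H,\Del')$ share the vertex set $H\backslash G$, and that the identity on $H\backslash G$ is a quasi-isometry between them: since $\Del$ and $\Del'$ both generate $G$, each generator of one is a word of bounded length in the other, so the two graph metrics are bi-Lipschitz equivalent up to an additive constant. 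Both graphs have bounded degree, and amenability is a quasi-isometry invariant within the class of bounded-degree graphs (the F\o{}lner/Cheeger condition is stable under such a change of metric). Hence $X$ is non-amenable iff $X'$ is, and combining this with the two equivalences above yields the proposition.

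The main obstacle is the passage in the second and third paragraphs: turning the purely combinatorial notion of strong genericity into the spectral quantity $\rho$ and then into non-amenability is where all the analytic content sits, and it is exactly here that symmetry of the generating set is indispensable (self-adjointness of $P$, reversibility of the walk). The quasi-isometry step itself is comparatively soft. Since these facts are precisely the content of \cite[Prop.~38, Thm.~51]{CeccheriniSilbersteinGH98}, in the paper it is enough to cite them; the route sketched above is how I would reprove them from scratch.
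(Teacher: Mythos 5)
Your proposal is correct and takes essentially the same route as the paper: the paper establishes this proposition purely by citation to \cite[Prop.~38, Thm.~51]{CeccheriniSilbersteinGH98}, and the three steps you sketch (strong genericity of the complement of $H$ $\iff$ exponential decay of return probabilities $\iff$ $\rho<1$ via self-adjointness of the transition operator; Kesten's criterion $\rho<1$ $\iff$ non-amenability of the Schreier graph; and quasi-isometry invariance of amenability for bounded-degree graphs) are precisely the content of those cited results, as you yourself note. Your remark that symmetry of the generating sets is indispensable also matches the paper's explicit caveat that the proposition fails for non-symmetric generating sets in general.
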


 \begin{proposition}[\cite{Bartholdi99,Cohen82,Grigorchuk77}]\label{prop:grigor}%{\cite[Thm.\ 1]{Grigorchuk77}},\cite{Bartholdi99,Cohen82}
  Let $G$ be a finitely generated group, $H \leq G$ be a subgroup, and $\eta: \Del^* \to G$ be a monoid presentation of $G$.  
  Let $\Xi$ be the set of reduced words of $\Del^*$.
  Then, $\Del^* \setminus \eta^{-1}(H)$ is a strongly generic
  in $\Del^*$ if and only if $\Xi \setminus \eta^{-1}(H)$ is strongly generic in $\Xi$.
 \end{proposition}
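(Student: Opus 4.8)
The plan is to reduce the statement to a comparison of two counting generating functions on the Cayley graph of the free object on $\Del$. Write $d = \abs\Del$ and let $\Phi$ be the group (a free product of cyclic groups) presented by $\Del$ with its involution and the sole relations $c\ov c = 1$; its Cayley graph with respect to $\Del$ is the $d$-regular tree $\cT_d$, and the set $\Xi$ of reduced words is exactly its vertex set, each reduced word being the unique geodesic label from the root. Free reduction $\Del^* \to \Phi$ does not change the image in $G$ (since $c \ov c = 1$ in $G$), so $\eta(x) \in H$ iff $\eta(\wh x) \in H$ for the reduced form $\wh x$. Setting $a_n = \abs{\set{x \in \Del^n}{\eta(x)\in H}}$ and $b_n = \abs{\set{w \in \Xi \cap \Del^n}{\eta(w)\in H}}$, the definition of strong genericity together with Cauchy--Hadamard translates the two claims into threshold conditions on growth rates: the complement of $H$ is strongly generic in $\Del^*$ iff $\limsup_n a_n^{1/n} < d$, and strongly generic in $\Xi$ iff $\limsup_n b_n^{1/n} < d-1$ (using $\abs{\Xi\cap\Del^n} = d(d-1)^{n-1}$). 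Equivalently, writing $\wh a(z) = \sum_n a_n z^n$ and $\wh b(z) = \sum_n b_n z^n$, the conditions are $R_{\wh a} > 1/d$ and $R_{\wh b} > 1/(d-1)$ for the respective radii of convergence.

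Next I would establish the generating-function identity linking the two. Grouping words by the length $k$ of their reduced form gives $a_n = \sum_{k} b_k\, p(k,n)$, where $p(k,n)$ is the number of length-$n$ walks in $\cT_d$ from the root to a fixed vertex at distance $k$; by vertex-transitivity this depends only on $k$. The standard first-passage decomposition on a tree yields $\sum_n p(k,n) z^n = F(z)^k G(z)$, where $G(z)$ is the return (closed-walk) generating function at a vertex and $F(z)$ is the first-passage generating function to a fixed neighbour, so that
\begin{equation*}
\wh a(z) = G(z)\cdot \wh b\bigl(F(z)\bigr).
\end{equation*}
A one-step analysis gives the quadratic $F(z) = z + (d-1)z\,F(z)^2$, whence $F(z) = \frac{1 - \sqrt{1 - 4(d-1)z^2}}{2(d-1)z}$. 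This has radius of convergence $\frac{1}{2\sqrt{d-1}}$, which for $d \geq 3$ is strictly larger than $1/d$ (equivalently $(d-2)^2 > 0$); $F$ is strictly increasing on $[0, \frac{1}{2\sqrt{d-1}})$, and the key computation is the exact value $F(1/d) = \frac{1}{d-1}$, which is precisely what makes the two thresholds $1/d$ and $1/(d-1)$ correspond. Since $G$ shares the radius $\frac{1}{2\sqrt{d-1}} > 1/d$, the factor $G(1/d)$ is finite and positive.

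Finally I would read off the equivalence from $\wh a = G\cdot(\wh b\circ F)$ using that all coefficients are nonnegative. If $R_{\wh b} > 1/(d-1)$, then since $F$ is continuous and increasing with $F(1/d) = 1/(d-1)$, we have $F(z) < R_{\wh b}$ for $z$ in a right neighbourhood of $1/d$, so $\wh b(F(z))$ and $G(z)$ are both finite there and $R_{\wh a} > 1/d$. Conversely, if $R_{\wh b} \leq 1/(d-1)$, then for every $z$ with $1/d < z < \frac{1}{2\sqrt{d-1}}$ we have $F(z) > F(1/d) = 1/(d-1) \geq R_{\wh b}$, so $\wh b(F(z)) = +\infty$ while $G(z)$ stays finite and positive, forcing $\wh a(z) = +\infty$ and hence $R_{\wh a} \leq 1/d$; this covers the critical case $R_{\wh b} = 1/(d-1)$ as well, where both complements fail to be strongly generic. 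This establishes $R_{\wh a} > 1/d \iff R_{\wh b} > 1/(d-1)$, which is the proposition. I expect the main obstacle to be the careful justification of the tree identities --- the decomposition giving $\wh a = G\cdot(\wh b \circ F)$ and the exact evaluation $F(1/d) = 1/(d-1)$ --- together with the bookkeeping needed to handle the degenerate low-degree regime ($d = 2$, i.e.\ amenable base situations) and to match the $\limsup$ growth-rate conditions with the ``almost all $n$'' clause in the definition of strong genericity. This argument is essentially the Grigorchuk--Cohen cogrowth criterion \cite{Grigorchuk77,Cohen82,Bartholdi99} specialised to the coset structure $\eta^{-1}(H)$.
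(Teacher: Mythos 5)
Your proposal is correct for $\abs\Del \geq 3$, and it supplies in full the proof that the paper itself omits: the paper proves nothing here, delegating the proposition entirely to the citations \cite{Grigorchuk77,Cohen82,Bartholdi99}, and your generating-function argument is precisely the Grigorchuk--Cohen cogrowth computation from those references, specialised to the coset language $\eta^{-1}(H)$. The individual steps check out: the translation of strong genericity into $R_{\wh a} > 1/d$ and $R_{\wh b} > 1/(d-1)$ (using $\abs{\Xi \cap \Del^n} = d(d-1)^{n-1}$); the identity $\wh a = G\cdot(\wh b \circ F)$, which is legitimate in $[0,\infty]$ by Tonelli since all coefficients are nonnegative; the quadratic $F(z) = z + (d-1)z F(z)^2$ from the first-step/first-passage decomposition on the tree; the pivotal evaluation $F(1/d) = 1/(d-1)$, via $1 - 4(d-1)/d^2 = ((d-2)/d)^2$; and the fact $R_F = R_G = \frac{1}{2\sqrt{d-1}} > 1/d$ for $d \geq 3$, which makes the interval $(1/d, \frac{1}{2\sqrt{d-1}})$ nonempty so that the converse direction (including the critical case $R_{\wh b} = 1/(d-1)$, where a series with nonnegative coefficients evaluated beyond its radius diverges to $+\infty$) goes through. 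What your self-contained route buys is an explicit quantitative proof with the degenerate boundary case handled; what the paper's route buys is brevity, offloading these tree identities to the literature.

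One caveat: the issue you set aside as ``bookkeeping'' in the low-degree regime is not bookkeeping --- for $d = \abs\Del = 2$ the proposition as stated is simply \emph{false}, so no argument can cover it. Take $G = \Z$ with $\Del = \os{c, \ov c}$ and $H = \os{1}$: the only reduced word in $\eta^{-1}(H)$ is the empty word, so $\Xi \setminus \eta^{-1}(H)$ is strongly generic in $\Xi$, yet $a_n = \binom{n}{n/2} \in \Theta(2^n n^{-1/2})$ for even $n$, so $\Del^* \setminus \eta^{-1}(H)$ is not strongly generic in $\Del^*$. Structurally this failure is visible in your own proof: at $d=2$ one has $R_F = 1/d$ and the interval $(1/d, \frac{1}{2\sqrt{d-1}})$ is empty, so the forward implication breaks down. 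The proposition thus implicitly requires $\abs\Del \geq 3$ (equivalently, a nonamenable free cover of the Cayley graph). This is a blemish in the paper's statement rather than in your argument, and it is harmless for the intended application, where $\Del = \os{a, \ov a, b, \ov b}$ gives $d = 4$; but a complete write-up should state the hypothesis $\abs\Del \geq 3$ explicitly rather than hope to absorb $d=2$ by extra care.
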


In order to see \prettyref{thm:gen} we proceed as follows: Let $\Xi$ denote the set of reduced words in  $\{a,\ov a, b, \ov b\}^*$ and  $\eta:\{a,\ov a, b, \ov b\}^* \to \BG$ the canonical presentation. Then  \prettyref{thm:backba}, \prettyref{prop:quasiiso}, and \prettyref{prop:grigor} show that $\Xi \setminus \eta^{-1}(H)$ is strongly generic in $\Xi$. 
 Now, with the same arguments as in \prref{sec:dyckwordpair} % (first proof of \prettyref{thm:gen}) 
 it follows that elements which cannot be conjugated into $H$ form a strongly generic set in $\Xi$.

%\begin{proof}
 Now, we turn to the proof of \prettyref{thm:backba}. It covers the rest of this section. 
  First, we consider $A = H =  B$. Then $G$ is a semidirect product 
 $G = H \rtimes \Z$. Let $\pi_2:G \to \Z$ the projection onto the second component. Then we have 
 $\eta(x) \in H$ \IFF $\pi_2(\eta(x)) = 0$.
 Since $\Del$ can be viewed as a constant, it is
not hard to see that we have  $\Prob{\eta(x) \in H} \in \Theta(1/\sqrt{ n})$. (Actually, if $\abs \Del$ is not viewed as a constant we obtain 
a more precise estimation. Since the expected value for ${\abs x}_\bet$ is $n/ 2\abs \Del$ one can show  $\Prob{\eta(x) \in H} \in \Theta(\sqrt{\abs\Del / n})$. But we do not need this for our purpose.)

 The second  case is  $A = H \neq B$. For example, $G$ is the Baumslag-Solitar group $\BS12$. We content ourselves with a 
 lower bound  on  $\Prob{\eta(x) \in H}$. %Assume that ${\Abs x}_\bet = 2m$,
 We begin with a the conditional \proba:
 \begin{equation}\label{eq:Ahb}
\condProb{}{\eta(x) \in H}{{\abs x}_\bet = 2m}\geq \frac{\binom {2m+1}{m}}{(m+1)2^m}
\in  \Theta(m^{-1.5}).
\end{equation}
To see this observe that, due to  $A = H$,  a \Breduction on a word $x \in \Del^*$ leads always to $H$ if both, ${\abs x}_b = {\abs x}_{\ov b}$ and for every prefix $y$ of $x$ we have ${\abs y}_b \geq  {\abs y}_{\ov b}$. Thus, $\eta(x) \in H$ as soon as 
%the $\beta$-sequence \aww{ist das definiert?} 
the projection of $x$ onto $\smallset{b, \ov b }^*$ is a Dyck word. 
As we noticed earlier, the number of Dyck words of length $2m$ is the 
$m$-th Catalan number $\frac{1}{m+1}\binom {2m}{m} \in \Theta(m^{-1.5})$. 
%Since the expected value for ${\abs x}_\bet$ is $n/ 2\abs \Del$ 
We obtain 
a trivial estimation $\Prob{\eta(x) \in H} \in \OO(n^{-2.5})$ which is good enough because it means that for  $A= H$ the set $\set{x\in \Del^*}{\eta(x)\not\in H}$ is not strongly generic in $\Del^*$. However, using some standard Chernoff bounds and the fact that the expected value for ${\abs x}_\bet$ is $n/ 2\abs \Del$,  
we can state for $A = H$ a more precise  upper and  lower bound as follows:
\begin{equation}\label{eq:Ahb}
\Prob{\eta(x) \in H}\in \Oh(\sqrt{\abs \Del/ n}) \cap  \Omega((\abs \Del/  n)^{1.5}). 
\end{equation}
Finally, let us consider the most interesting case $A \neq H \neq B$.
This is the situation \eg in the Baumslag group $\BG$. 
In order to finish the proof of \prref{thm:backba} we have to show 
$\Prob{\eta(x) \in H}\approx 0$. This covers the rest of this section. 
As we have done in  \prettyref{sec:generic} we switch the probability space. We embed $\Del^n$ into the space $\Del^*$
with  a measure $\mu_{0,n}$ on $\Del^*$ which concentrates its mass on 
$\Del^n$ (\ie $\mu_{0,n}(\Del^n) =1$) with  corresponding \proba
 $ \PrO{}{\cdots}$. %Using  $A \neq H \neq B$
We now have to show that $\PrO{}{\eta(x)\in H}\approx 0$
 if $A\neq H \neq B$. Let  $\mu_m$ be the  measure
 on $\Del^*$ which is defined by reading letters from $\Del$ each with equal probability as long as the random walk contains at most $m$ letters $\bet \in \os{b, \ov b}$. This gives a corresponding 
 \proba on $\Del^*$ which is concentrated on those words with $\abs{x}_\bet = m$. We denote the
 corresponding \proba 
 by $\Prk{m}{\cdots}$. 
 Still there is a close connection between these probabilities. 
 In particular: 
  \begin{align}\label{eq:mamamia}
 & \PrO{}{\abs{x}_\beta = m\,} = \binom{n}{m} \cdot (2/\abs{\Del})^m \cdot (1 - 2/\abs{\Del})^{n-m} = \Prk{m}{\abs{x} = n\,} \\
 & \condProb{0,n}{\eta(x) \in H \,}{\abs{x}_\beta = m} = 
  \condProb{m}{\eta(x) \in H\,}{\abs{x} = n}
 \end{align}
 
 Since $\PrO{}{\abs{x}_\beta = m\,} \approx 0$ for $m \leq n/\abs{\Delta}$ we can perform a similar computation as in \prettyref{sec:dyckwordpair}:
 {\allowdisplaybreaks
 \begin{align*}
 \PrO{}{\eta(x) \in H}\;  & =  \quad \sum_{m = 0}^{n} \PrO{}{\eta(x) \in H \wedge\abs{x}_\beta = m\,}\\ 
&\approx  \sum_{m = \ceil{n/\abs{\Delta}}}^{n}  \PrO{}{\eta(x) \in H \wedge\abs{x}_\beta = m\,}\\ 
&=  \sum_{m = \ceil{n/\abs{\Delta}}}^{n}  
\condProb{0,n}{\eta(x) \,}{\abs{x}_\beta = m}
\cdot \PrO{}{\abs{x}_\beta = m\,}\\ 
&=  \sum_{m = \ceil{n/\abs{\Delta}}}^{n}  
\condProb{m}{\eta(x) \in H\,}{\abs{x} = n}\cdot \Prk{m}{\abs{x} = n\,}\\ 
&=  \sum_{m = \ceil{n/\abs{\Delta}}}^{n}  \Prk{m}{\eta(x) \in H \wedge\abs{x} = n\,}\\ 
  &\leq \sum_{m = \ceil{n/\abs{\Delta}}}^{n}\Prk{m}{\eta(x)  \in H}.\end{align*}}
 Therefore, it is enough to show that $\Prk{m}{\eta(x) \in H}\approx 0$ as a function in $m$. 
 
 There is also a natural probability distribution on $\Sig^*$ which is formally defined by $\mu_{0}$ (N.B.{} $\mu_{0}$ is different from $\mu_{0,n}$!) Indeed, we have $\mu_0(\Sig^*) = 1$ and 
 the distribution on $\Sig^*$ is given by a random walk which stops with \proba $2/\abs \Del$ and, if it does not stop, then it chooses the next letter with equal probability. In order to emphasize that the mass of 
 $\mu_{0}$ is on $\Sig^*$ we also write $\Prk{\Sig}{y } = \Prk{0}{y}$ for 
 $y \in \Sig^*$. 

\begin{lemma}\label{lem:indexprob}
  For all $\gamma \in \Sig^*$ and $\beta \in \smallset{b, \ov b}$ we have
 $$\Prk{\Sig}{\eta(\beta \gamma y \ov \beta) \not \in H} \geq \frac{2}{\abs{\Delta}^2}.$$
\end{lemma}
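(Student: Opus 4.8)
The plan is to translate the event $\eta(\beta\gam y\ov\beta)\in H$ into a subgroup-membership statement inside $H$ by Britton's lemma, and then to exhibit an explicit short word $y$ witnessing non-membership whose $\mu_0$-probability is already $\geq 2/\abs{\Del}^2$. The word $\beta\gam y\ov\beta$ has $\beta$-length $2$, so it is either \Breduced (and then outside $H$) or admits a single \Breduction. Writing $C = A$ if $\beta = b$ and $C = B$ if $\beta = \ov b$, the pinch $\beta\,(\gam y)\,\ov\beta$ can be Britton-reduced if and only if $\eta(\gam y)\in C$, in which case it evaluates to $\phi^{\pm 1}(\eta(\gam y))\in H$; otherwise $\beta\gam y\ov\beta$ is \Breduced of positive $\beta$-length and hence lies outside $H$. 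Thus
\[
\Prk{\Sig}{\eta(\beta\gam y\ov\beta)\not\in H} = \Prk{\Sig}{\eta(\gam y)\not\in C}.
\]

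The hypothesis $A\neq H\neq B$ enters exactly here: it guarantees that $C$ is a \emph{proper} subgroup of $H$. Since $\Sig$ generates $H$, a proper subgroup cannot contain all of $\Sig$; hence there is a single letter $s\in\Sig$ with $\eta(s)\notin C$. This is the only structural input required, and it is where the assumption $A\neq H\neq B$ is used.

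It then remains, for each fixed $\gam$, to name a short word $y$ with $\eta(\gam y)\notin C$ and to bound its probability, splitting on whether $\eta(\gam)\in C$. If $\eta(\gam)\notin C$, the empty word already works (the walk stops immediately), an event of probability $2/\abs{\Del}\geq 2/\abs{\Del}^2$. If $\eta(\gam)\in C$, then $C$ being a subgroup together with $\eta(s)\notin C$ gives $\eta(\gam s)\notin C$, so $y = s$ works; under $\mu_0$ the walk continues once (probability $1-2/\abs{\Del} = \abs{\Sig}/\abs{\Del}$), selects $s$ (probability $1/\abs{\Sig}$), then stops (probability $2/\abs{\Del}$), so that $\mu_0(y=s) = 2/\abs{\Del}^2$. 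In both cases $\Prk{\Sig}{\eta(\gam y)\notin C}\geq 2/\abs{\Del}^2$, which is the assertion.

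I do not expect a genuine obstacle: the only points demanding care are the correct identification of $C$ with $A$ or $B$ through Britton's lemma and the elementary bookkeeping of the stopping and letter-selection probabilities of $\mu_0$; both are routine.
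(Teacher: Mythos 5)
Your proposal is correct and takes essentially the same approach as the paper: both reduce, via Britton's lemma, to showing that $\eta(\gam y)$ lies outside the relevant associated subgroup ($A$ or $B$) with probability at least $2/\abs{\Del}^2$, split on whether $\eta(\gam)$ lies in that subgroup, and in each case witness the bound by the empty word (probability $2/\abs{\Del}$) or by a single generator outside the subgroup (probability $2/\abs{\Del}^2$). The only differences are cosmetic: the paper fixes $\bet = b$ by symmetry where you carry the uniform notation $C$, and your subgroup step states correctly what the paper's text garbles slightly (its ``$\eta(\gam a)\in A$ \IFF $\eta(a)\notin A$'' should read $\eta(\gam a)\notin A$ \IFF $\eta(a)\notin A$).
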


 \begin{proof}
By symmetry we may assume $\bet= b$. We have to show that $\Prk{\Sig}{\eta(\gamma y) \notin A}\geq 2/{\abs \Del}^2$. 
 We consider the cases $\eta(\gam) \notin A$ and $\eta(\gam)  \in A$
 separately. 
 For  $\eta(\gam) \not \in A$ we obtain 
 $$\Prk{\Sig}{\eta(\gamma y) \notin A} \geq \Prk{\Sig}{ y = 1 } = 2/\abs \Del \geq  2/{\abs \Del}^2. $$
 For  $\eta(\gam)  \in A$ and $a \in \Sig$ we obtain 
 $\eta(\gam a) \in A$ \IFF $\eta(a) \notin A$. Since 
 $A\neq H$ and $\Sig$ generates $H$, there must be  some letter $a\in \Sigma$ with $\eta(a)\not \in A$. 
Therefore, in the second case  
 $$\Prk{\Sig}{\eta(\gamma y) \notin A} \geq \Prk{\Sig}{ y = a} = 2/{\abs \Del}^2. $$
\qed
 \end{proof}

 As before a $\beta$-factorization of $x\in \Del^*$ with $\abs{x}_\beta = m$ 
is written as a word
$x=\gam_{0}\bet_1\gam_{1}\ldots \bet_m\gam_{m}$  such that $\bet_i \in \os{b, \ov b}$ and $\gam_i \in \Sig^*$ for $1 \leq i \leq m$. 
Using the notion of $\beta$-factorization we define for all 
$0 \leq \ell \leq m$ 
a random variable $X_\ell:\Del^* \to \N$ as follows. We let 
$X_\ell(x) = \Abs{\gam_{0}\bet_1\gam_{1}\ldots \bet_\ell\gam_{\ell}}_\bet$. Another way to explain $X_\ell(x)$ is as follows. Choose any prefix 
$z$ of $x$ such that $\abs{z}_\beta = \ell$, compute the \Breduction 
$\wh z$ of $z$ and let  $X_\ell(x) = \abs{\wh z}_\beta$, \ie 
$X_\ell(x) = \Abs{z}_\beta$. The differences $Y_i= X_i - X_{i-1}$ 
define random variables $Y_i$ for $1 \leq i \leq m$ with values in  
$\smallset{-1, 1}$. Clearly, 
 $X_\ell = \sum_{i=1}^{\ell} Y_i$ for all $0 \leq \ell \leq m$.
 Note that $X_0 = 0$ and $X_1 = Y_1= 1$ are constant functions. 
  
  Consider a $\beta$-factorization $x=\gam_{0}\bet_1\gam_{1}\ldots \bet_m\gam_{m}$ for $x$ with $\abs x_\beta= m$. For $1 \leq i \leq m$ let 
  $z_{i-1}$
  be  \Breduced such that $\eta(z_{i-1})= \eta(\gam_{0}\bet_1 \ldots \gam_{i-2}\bet_{i-1})$. 
  Then the $\beta$-factorization of $z_{i-1}$ becomes
  $\gam'_{0}\bet'_1\gam'_{1}\ldots \bet'_j\gam'_{j}$ for some 
  $j \leq i-1$. 
  Note that the last factor $\gam'_{j}$ can be, a priori, any word in $\Sig^*$.  Now, it depends only on  the factors $\bet'_j\gam'_{j}$ and 
  $\gam_{i-1}\bet_i$ whether or not the $\beta$-length of the Britton-reduced word 
  increases or decreases when reading the next factor $\gam_{i-1}\bet_i$.
  The \proba for that is described by the random variable $Y_i$. 
   For all $\eps \in \{-1,1\}^{i-1}$ \prettyref{lem:indexprob} shows
  \begin{equation}\label{eq:lisa}
\condProb{}{Y_i =1}{Y_j = \eps_j \text{ for } j<i} \geq 1/2 + 1/2 \cdot 2/\abs{\Del}^2 = 1/2 + 1/\abs{\Del}^2.
\end{equation}
Let $\set{Z_i}{i=1,\dots, m}$ be a set of $m$ independent random variables taking values in $\smallset{-1, 1}$ such that $\Prob{Z_i =1} = 1/2 +1/\abs{\Del}^2$ for $1 \leq i \leq m$. By \prref{eq:lisa} it follows that for every $\eps= (\eps_j)\in \smallset{-1,1}^{k-1}$ and $1 \leq k \leq m$ we have
\begin{equation}\label{eq:lisach}
\condProb{m}{Y_k = -1}{Y_j = \eps_j \;\forall j<k}\leq \Prob{Z_k = -1}.
\end{equation}
This observation is crucial in the proof of the next lemma.

\begin{lemma}\label{lem:Xm}
We have 
$$\Prk{m}{X_m =0} \leq \left(1 -  \frac{4}{\abs\Del^4}\right)^{m/2}.$$
\end{lemma}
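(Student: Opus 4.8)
The plan is to bound the ``return probability'' $\Prk{m}{X_m = 0}$ by the larger quantity $\Prk{m}{X_m \le 0}$ and then apply a one-sided exponential (Chernoff) estimate that only uses the uniform conditional bound \prref{eq:lisach}; this way the dependence among the increments $Y_i$ causes no trouble. Write $p = 1/2 + 1/\abs\Del^2$ and $q = 1/2 - 1/\abs\Del^2$, so that $\Prob{Z_i = -1} = q$ and $4pq = 1 - 4/\abs\Del^4$. Since $\oneset{X_m = 0}\sse\oneset{X_m \le 0}$, it suffices to prove $\Prk{m}{X_m \le 0}\le (4pq)^{m/2}$. Note that the parity of $m$ needs no separate treatment: for odd $m$ the event $\oneset{X_m=0}$ is empty, so the asserted bound holds trivially.

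First I would fix a parameter $t \ge 0$ and use Markov's inequality in the exponential form $\Prk{m}{X_m \le 0} = \Prk{m}{e^{-tX_m}\ge 1}\le \E{e^{-tX_m}}$, where the expectation $\E{\cdot}$ is taken with respect to $\mu_m$. Since $X_m = \sum_{i=1}^m Y_i$ with $Y_i\in\smallset{-1,1}$, we have $e^{-tX_m} = \prod_{i=1}^m e^{-tY_i}$, so the task reduces to estimating this product.

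The key step is to peel off the factors one at a time. Conditioning on $Y_1,\ldots,Y_{m-1}$ and writing $r = \condProb{}{Y_m = -1}{Y_1,\ldots,Y_{m-1}}$, the innermost conditional factor equals $re^{t} + (1-r)e^{-t}$, which for $t \ge 0$ is nondecreasing in $r$ (its derivative in $r$ is $e^t - e^{-t}\ge 0$). By \prref{eq:lisach} we have $r \le q$, hence this factor is bounded by the constant $qe^{t} + pe^{-t}$, independent of the past. Pulling this constant out and iterating (the base case $Y_1\equiv 1$ only improving the estimate, since $e^{-t}\le pe^{-t}+qe^t$) yields $\E{e^{-tX_m}}\le (pe^{-t} + qe^t)^m$. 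Finally I would optimize over $t$ by choosing $t$ with $e^{2t} = p/q$, which is legitimate because $p > q$; for this value $pe^{-t} + qe^t = 2\sqrt{pq}$, so $\Prk{m}{X_m \le 0}\le (2\sqrt{pq})^m = (4pq)^{m/2} = (1 - 4/\abs\Del^4)^{m/2}$, as claimed.

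The only real subtlety, and the reason the statement is not entirely routine, is that the increments $Y_i$ are \emph{not} independent, so one cannot simply invoke a Chernoff bound for sums of i.i.d.\ variables. The whole point of \prref{eq:lisach} is that it gives a uniform upper bound on the conditional ``down'' probability regardless of the history; combined with the monotonicity of $r\mapsto re^t+(1-r)e^{-t}$, this is exactly what the telescoping argument needs, and it effectively dominates the dependent walk by the independent comparison walk $(Z_i)$ without having to construct an explicit coupling.
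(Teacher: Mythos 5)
Your proof is correct, and it takes a genuinely different route from the paper's. Both arguments rest solely on \prref{eq:lisach} and both replace the event $\oneset{X_m=0}$ by $\oneset{X_m\le 0}$, but the paper avoids exponential moments entirely: it proves, by induction on the length of the sub-sum, the dominance inequality \prref{eq:nalja}, which says that conditioned on any history $(\eps_j)$ the tail probabilities of $\sum_{i=k}^{\ell}Y_i$ are bounded by those of the independent comparison walk $\sum_{i=k}^{\ell}Z_i$; taking $k=1$, $\ell=m$ (\prref{eq:watnu}) transfers everything to the independent walk, which is then estimated by brute counting: each sign pattern in the event $\oneset{\sum_i Z_i\le 0}$ has at most $m/2$ plus-signs, hence probability at most $\left(\tfrac12-\tfrac{1}{\abs\Del^2}\right)^{m/2}\left(\tfrac12+\tfrac{1}{\abs\Del^2}\right)^{m/2}$, and there are at most $2^m$ patterns. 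Your Chernoff--Azuma-style argument (Markov applied to $\E{e^{-tX_m}}$, peeling one conditional factor at a time via the monotonicity of $r\mapsto re^t+(1-r)e^{-t}$ together with $r\le q$ from \prref{eq:lisach}, then optimizing $e^{2t}=p/q$, where $p=\tfrac12+\tfrac{1}{\abs\Del^2}$ and $q=1-p$ as in your proposal) compresses both steps into one and lands on exactly the same constant, since $(2\sqrt{pq})^m=(4pq)^{m/2}=\left(1-\tfrac{4}{\abs\Del^4}\right)^{m/2}$. What your route buys: it is shorter, and by re-tuning $t$ it immediately yields deviation bounds $\Prk{m}{X_m\le \alpha m}$ for any $\alpha$ below the drift. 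What the paper's route buys: it is completely elementary (no exponential function, no optimization), and its intermediate statement \prref{eq:nalja} is stronger, bounding $\Prk{m}{X_m\le c}$ for every threshold $c$ at once. Your side remarks are also sound: the degenerate first increment $Y_1\equiv 1$ only improves the bound because $\Prob{Y_1=-1}=0\le q$, and odd $m$ needs no special care since then $\oneset{X_m=0}$ is empty, which is how the paper dismisses it as well.
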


\begin{proof}
 The assertion is trivial for $m=0$ or $m$ odd. Hence, let 
 $m \geq 2$ be even. First, let us show that 
 for all $p \in \Z$, $1 \leq k \leq \ell \leq m$, and $\eps = (\eps_j) \in \{-1,1\}^{k-1}$ we have 
 \begin{equation}\label{eq:nalja}
\condProb{m}{\sum_{i=k}^{\ell} Y_i \leq p}{Y_j = \eps_j \;\forall j<k} \leq \Prk{m}{\sum_{i=k}^{\ell} Z_i \leq p}.
\end{equation}
We prove \prref{eq:nalja} by induction on $k-\ell$. The case $\ell= k$ is trivial, hence let $\ell < k$.
%\vdd{
%Let us write $\condProb{m}{Y_k = -1}{Y_j = \eps_j \;\forall j<k} + \del_{\eps,k}= \Prk{m}{Z_k = -1}$. By \prref{eq:lisach} we have $\del_k \geq 0$. %}
\begin{align*}
 &\condProb{m}{\sum_{i=k}^{\ell} Y_i \leq p}{Y_j = \eps_j \;\forall j<k} \\
 & \quad=  \sum_{\eps_k = \pm 1}\condProb{m}{Y_k = \eps_k}{Y_j = \eps_j \;\forall j<k}\cdot \condProb{m}{\sum_{i=k+1}^{\ell} Y_i \leq p - \eps_k}{Y_j = \eps_j \;\forall j\leq k} \\
		      & \quad \leq\sum_{\eps_k = \pm 1}\condProb{m}{Y_k = \eps_k}{Y_j = \eps_j \;\forall j<k} \cdot \Prk{m}{\sum_{i=k+1}^{\ell} Z_i \leq  p - \eps_k}\\  %%%
%		      & \quad =\sum_{\eps_k = \pm 1}(\Prk{m}{Z_k = \eps_k} +\eps_k \del_{\eps,k}) \cdot \Prk{m}{\sum_{i=k+1}^{\ell} Z_i \leq  p - \eps_k}\\ %%%%%
		      & \quad\leq  \sum_{\eps_k = \pm 1}\Prk{m}{Z_k = \eps_k} \cdot \Prk{m}{ \sum_{i=k+1}^{\ell} Z_i \leq  p - \eps_k} =  \Prk{m}{\sum_{i=k}^{\ell} Z_i \leq p}.
\end{align*}
We have to explain the inequality leading to the last line above. 
By \prref{eq:lisach} there is some $\del_{\eps,k} \geq 0$ such that 
$\condProb{m}{Y_k = -1}{Y_j = \eps_j \;\forall j<k} + \del_{\eps,k}= \Prk{m}{Z_k = -1}$. Thus, by definition, $\condProb{m}{Y_k = 1}{Y_j = \eps_j \;\forall j<k} - \del_{\eps,k}= \Prk{m}{Z_k = 1}$. Hence, the inequality follows from
$\Prk{m}{ \sum_{i=k+1}^{\ell} Z_i \leq  p - 1} \leq \Prk{m}{ \sum_{i=k+1}^{\ell} Z_i \leq  p + 1}$. 

 As a special case for $k=1$ and $\ell = m$ we obtain
%%%%%%%%%%%%%%%%%%%%%
\begin{align}\label{eq:watnu}
\Prk{m}{X_m \leq p}=  \Prk{m}{\sum_{i=1}^{m} Y_i \leq p}
  \leq \Prk{m}{ \sum_{i=1}^{m} Z_i \leq  p }.
		      \end{align}		     
In order to prove the lemma it is enough to consider $p=0$. We get
 \begin{align*}
  \Prk{m}{X_m = 0}  
  &\leq \Prob{\sum_{i=1}^{m} Z_i \leq 0} %\\ &
  = 
  \sum_{\eps = (\eps_j)\in \oneset{-1,1}^{m} \atop 
  %\abs{\eps^{-1}(1)} \leq m/2
  \abs { \set {j}{\eps_j = 1}}\leq m/2
  }
  \prod_{i=1}^{m}\Prob{Z_i =\eps_i}\\
 		  &\leq 2^m \cdot \left(\frac{1}{2} - \frac{1}{{\abs \Del}^2}\right)^{m/2}\cdot \left(\frac{1}{2} + \frac{1}{{\abs \Del}^2}\right)^{m/2} =  \left(1 -  \frac{4}{\abs\Del^4}\right)^{m/2}.
 \end{align*}
 \qed
 \end{proof}
Hence, we have concluded the proof of \prettyref{thm:backba}
 because \prettyref{lem:Xm} implies in particular $\Prk{m}{X_m = 0} \approx 0$.

\section*{Conclusion} We have investigated the complexity of the \CP in two important groups 
in combinatorial group theory. The \CP in $\BS12$ is $\TC$-complete. If  division in \PC{}s is 
 non-elementary in the worst case, then the \CP in $\BG$ is non-elementary on the average, but
 solvable in $\Oh(n^4)$ on a strongly generic subset. This is a striking contrast underlying
 the importance of generic case complexity on natural examples.  
In order to derive the result about generic case complexity, we proved a more general result about HNN extensions.  We showed that  $G = \Gen{H,b}{bab^{-1}=\phi(a), a \in A}$ has a non-amenable Schreier graph with respect to the base group $H$ \IFF $A\neq H \neq B$.

% 
%%
%\addcontentsline{toc}{section}{Bibliography}
%\bibliography{../TRACES/traces}
%%\bibliography{../../Literatur/TRACES/traces}
\end{document}